\documentclass[acmlarge,nonacm]{acmart}

\usepackage{amsmath}

\usepackage{thmtools}
\usepackage{thm-restate}

\usepackage{algorithm}
\usepackage{algpseudocode}
\usepackage{indentfirst}

\usepackage{cleveref}

\usepackage{graphicx}
\usepackage{subcaption}

\usepackage[normalem]{ulem}
\usepackage{color}

\declaretheorem[name=Definition, numbered=no]{definition*}

\AtBeginDocument{}

\AtEndPreamble{%
    \theoremstyle{acmdefinition}
    
}

\newcommand{\Oh}{\mathcal{O}}
\newcommand{\oh}{o}

\newcommand{\Nat}{\mathbb{N}}

\newcommand{\Ps}{\mathcal{P}}

\newcommand{\HModel}{H(n; \mathcal{M})}
\newcommand{\HModelExpected}{\widehat{H}(n; \mathcal{P})}
\newcommand{\HUniform}{H_k(n, M)}
\newcommand{\HUniformExpected}{H_k(n, p)}
\newcommand{\RandomCSP}{\widehat{CSP}_{n, p}(\mathcal{Q})}

\newcommand{\Prob}[1]{\mathbb{P}\left[#1\right]}
\newcommand{\EV}[1]{\mathbb{E}\left[#1\right]}

\begin{document}
    \title{Property B for random non-uniform hypergraphs}

  
    \author{Grzegorz Ryn}
    \additionalaffiliation{%
        \department{Doctoral School of Exact and Natural Sciences}
        \institution{Jagiellonian University}
        \city{Kraków}
        \country{Poland}
    }

    \author{Jakub Kozik}
    \affiliation{%
        \institution{Theoretical Computer Science, Jagiellonian University}
        \city{Kraków}
        \country{Poland}
    }

    \begin{abstract}
        We show that, in random hypergraphs with several permitted edge sizes, non-uniformity affects existential and algorithmic bounds for 2-colorability (Property B) in fundamentally different ways. Assigning weight $2^{-k}$ to each $k$-edge, we obtain upper and lower bounds in terms of the total edge weight that asymptotically match the uniform bounds as the minimum permitted edge size grows, regardless of how edges are distributed among sizes. We also extend the best-known algorithm for 2-coloring random uniform hypergraphs to the non-uniform setting. With weight $\frac{k}{2^k}$ assigned to each $k$-edge, we construct non-uniform instances whose expected total edge weight per vertex is arbitrarily large, yet the algorithm finds a proper coloring asymptotically almost surely. In the uniform setting, the algorithm fails with high probability once this quantity exceeds a constant. Our construction uses sufficiently separated edge sizes, so that edges of different sizes become relevant at well-separated stages of the execution and their effects are essentially independent.
    \end{abstract}

    \keywords{}


    \maketitle

    \section{Introduction}
        A hypergraph $\mathcal{H} = \left( V, E \right)$ consists of a vertex set $V$ and a family $E \subseteq \mathcal{P}(V)$ of edges. A 2-coloring $c: V \rightarrow \{\text{RED},\text{BLUE}\}$ is \emph{proper} if no edge $e \in E$ is \emph{monochromatic}, i.e., every edge contains both colors. A hypergraph admitting such a coloring is called $2$-colorable, or is said to have \emph{Property B}. Equivalently, Property B is the satisfiability problem for a family of \emph{not-all-equal} constraints on Boolean variables.

        So far, random instances of Property B have been studied primarily in the uniform case, where every edge has the same size $k$. There, the existential question is fairly well understood. Almost all random hypergraphs with slightly fewer than $\frac{\ln 2}{2} \cdot 2^k \cdot n$ edges are $2$-colorable, while above that density proper colorings cease to exist. The best known algorithm, due to Achlioptas, Kim, Krivelevich, and Tetali~\cite{AKKT02}, which we abbreviate to AKKT, reaches only a density smaller by a factor of order $k$. In this paper we study what happens when edges of several different sizes are allowed to occur at once. As it turns out, the two classical aspects of the problem, namely the existence of a proper coloring and our ability to find one efficiently, react to such a change in fundamentally different ways.

        Larger edges are less likely to be monochromatic under a uniformly random $2$-coloring, and the edge count alone is therefore too coarse to measure non-uniform hypergraphs. Erdős and Lovász~\cite{EL1975} addressed this in the worst-case problem by assigning the weight $2^{1-|e|}$ to each edge $e$, so that the total weight of a hypergraph is precisely the expected number of monochromatic edges. We follow the same idea
assigning weight $2^{-k}$ to a $k$-edge in the existential setting and $k2^{-k}$ in the algorithmic one. Under such normalization, each of the uniform bounds corresponds to a total edge weight proportional to $n$, and we refer to the expected total weight divided by $n$ as the \emph{normalized density}.

        In the existential setting, this single quantity appears to capture most of what matters. Both of our bounds depend on the instance only through its normalized density, with every permitted edge size contributing additively, and as the minimum permitted size grows they approach the constant $\frac{\ln 2}{2}$ known from the uniform model. In this sense, mixing edge sizes seems to bring neither benefit nor penalty, and the uniform bounds survive non-uniformity to leading order.

        For AKKT the situation is different, and no such quantity seems to suffice. Edges of different sizes obstruct the algorithm at different \emph{moments} of its execution, with larger edges becoming dangerous later, so that sizes which are sufficiently far apart never obstruct it simultaneously. By assigning to each of them as much weight as it can carry on its own, we obtain instances of arbitrarily large normalized density on which AKKT still succeeds. Since the algorithm is known to fail on uniform instances of large normalized density, the same value of this parameter may correspond to failure in the uniform case and to success in a non-uniform one. In particular, no condition depending on the normalized density alone can characterize the success of AKKT.

        \subsection{Property B and the uniform benchmark}
        The foundational work on the 2-colorability of such structures was done by Bernstein \cite{Ber08} over a century ago. For this historical reason, hypergraph 2-colorability is often called \emph{Property B}. A significant number of results focus on $k$-uniform hypergraphs ($k$-graphs for short), where every edge consists of exactly $k$ vertices. As shown by Lovász \cite{Lovasz73}, this restriction still makes the topic interesting, as for $k \geq 3$ the computational problem of recognizing 2-colorable $k$-graphs is NP-complete.

        In the worst-case setting, the effect of non-uniformity has already been studied through the weighting recalled above. A $k^+$-graph is a hypergraph in which every edge has size at least $k$. Erdős and Lovász asked how small the total weight of a non-$2$-colorable $k^+$-graph can be, and conjectured that this minimum tends to infinity with $k$, which was confirmed by Beck~\cite{Beck78}; the best known lower bound is $\Omega(\log k)$~\cite{DGK18}. While it seems harder to color non-uniform hypergraphs, no known construction of a non-$2$-colorable hypergraph exploits non-uniformity. In fact, the best upper bound remains $\Oh(k^2)$, which follows from Erdős' probabilistic argument for the uniform case, recalled below.

        Fixing the constraint size $k$ and letting the number of vertices $n$ tend to infinity leads to direct analogues of the well-known random graph models $G(n,p)$ and $G(n, M)$. Let $\HUniform$ and $\HUniformExpected$ denote two random $k$-graph models. In $\HUniform$, exactly $M$ $k$-edges are chosen uniformly at random, whereas in $\HUniformExpected$ each of the $\binom{n}{k}$ possible $k$-edges is included independently with probability $p$.
We consider sequences of these models in which $n$ tends to infinity and the parameter $M$ or $p$ depends on $n$.
 An event holds \emph{asymptotically almost surely} (a.a.s.) along such a sequence if its probability tends to one as $n\to\infty$. Studying $2$-colorability in these models shifts the focus from the worst case to the \emph{typical} behaviour under different edge-density regimes. In particular, it allows us to ask when proper $2$-colorings exist or when they can be found efficiently, while avoiding problematic but improbable instances.

        We recall three benchmarks from the uniform model. The first is the density scale on which proper colorings cease to exist. In the early 1960s, Erdős and Hajnal~\cite{ErdHaj61} asked about the minimum number of edges in a non-$2$-colorable $k$-graph. Shortly after, in 1964, Erdős~\cite{Erd1964} showed with a probabilistic argument that there are non-$2$-colorable $k$-graphs with $\Oh(k^2 \cdot 2^k)$ edges. This argument can be directly recreated in the $\HUniform$ model, yielding the following. See Alon and Spencer~\cite{ASunpub} for the adapted version of the proof.

        \begin{restatable}[Non-constructive upper bound for $k$-graphs \cite{ASunpub}]{thm}{uniformUpperBound} 
            \label{thm:uniformUpperBound}
            \hfill \\
            For $M \geq \frac{\ln 2}{2} \cdot 2^k \cdot n$, a random hypergraph drawn from $\HUniform$ is a.a.s. non-$2$-colorable.
        \end{restatable}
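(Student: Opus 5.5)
The plan is a first-moment argument on the number $X$ of proper $2$-colorings of $\mathcal{H} \sim \HUniform$, with $k$ fixed and $n \to \infty$. The naive version only gives $\EV{X} = \Oh(1)$ at the threshold, so it is not enough on its own. I will recover a factor tending to $0$ from the second-order term of $\ln(1-x)$. Since $k$ is fixed, the correction $\exp(-\Theta(n4^{-k}))$ that this produces beats every other error term, which are all $\Oh(1)$ in the exponent.

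\emph{Step 1 (monotonicity).} Adding edges cannot create proper colorings. We can couple $\HUniform$ with $H_k(n,M')$ for $M' \le M$ by taking the first $M'$ edges of a uniformly random ordering. Hence it suffices to treat $M = \lceil \frac{\ln 2}{2} 2^k n \rceil$.

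\emph{Step 2 (one coloring).} Fix $c$ with $a$ red vertices and let $N=\binom{n}{k}$. The number of monochromatic $k$-sets is $m_c=\binom{a}{k}+\binom{n-a}{k}$. By convexity of $t\mapsto\binom{t}{k}$, $m_c \ge 2\binom{\lfloor n/2\rfloor}{k}$ (up to the trivial parity adjustment). Moreover $\binom{n/2}{k}/\binom nk \ge 2^{-k}(1-\Oh(k^2/n))$. The $M$ edges are distinct uniform $k$-sets, so
\[
\Prob{c \text{ proper}} = \prod_{i=0}^{M-1}\frac{N-m_c-i}{N-i} \le \Bigl(1-\frac{m_c}{N}\Bigr)^M \le (1-x)^M,
\]
where $x = 2^{1-k}(1-\Oh(k^2/n))$.

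\emph{Step 3 (summing).} Use $\ln(1-x)\le -x-\frac{x^2}{2}$. Then
\[
\EV{X} \le 2^n \exp\Bigl(-Mx - \tfrac{M x^2}{2}\Bigr).
\]
We have $Mx \ge n\ln 2 - \Oh(k^2 2^{-k}\cdot 2^k) = n\ln 2 - \Oh(k^2)$, which cancels $2^n$ up to a factor $e^{\Oh(k^2)}$. The remaining factor is $\exp(-Mx^2/2) = \exp\bigl(-(1+\oh(1))\,\ln 2\cdot n\, 2^{-k}\bigr)$. So
\[
\EV{X} \le e^{\Oh(k^2)}\, e^{-\Omega(n2^{-k})} \to 0 .
\]
Markov's inequality then gives $\Prob{X\ge1}\to0$, i.e. a.a.s.\ non-$2$-colorable.

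\emph{Main obstacle.} The delicate point is Step 3. The leading term exactly cancels $2^n$, so I must make sure that all lower-order losses are only $\Oh(1)$ in the exponent, and are therefore dominated by the quadratic gain $\Theta(n 2^{-k})$. These losses are the $\Oh(k^2/n)$ error in $\binom{n/2}{k}/\binom nk$, the rounding of $M$, and unbalanced colorings. Unbalanced colorings only increase $m_c$, so they are harmless. The quadratic gain is exactly what the earlier naive computation discarded, and $k$ being fixed is what makes the comparison go through.
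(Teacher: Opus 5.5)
Your proposal is correct and follows essentially the same route as the paper: the appendix proof of Theorem~\ref{thm:nonUniformUpperBound}, specialized to a single arity with $\lambda_k=\frac{\ln 2}{2}$, is the same first-moment computation, where the second-order term of $\ln(1-x)\le -x-\frac{x^2}{2}$ supplies the factor $(1+2^{-k})$ that gives $\EV{X}\le \Oh(1)\cdot e^{-n2^{-k}\ln 2}\to 0$. Your exact without-replacement product and your monotonicity reduction to $M=\Oh(n)$ are, if anything, slightly more careful than the paper, which uses a with-replacement approximation and simply assumes the edge count is $\Oh(n)$.
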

        The asymptotic is made with $k$ fixed and $n \to \infty$, but similar argument is also valid for $k$ increasing with $n$. For details, see \cite{DKS21}.

        The second benchmark concerns the existence of proper colorings below this density. Alon and Spencer~\cite{ASunpub} showed that almost all $k$-graphs with at most $c\frac{2^k}{k^2}n$ edges are $2$-colorable. In 2002, Achlioptas and Moore~\cite{AM2002} improved the lower bound to asymptotically almost match the upper bound; however, their argument is not constructive.

        \begin{restatable}[Non-constructive lower bound for $k$-graphs \cite{AM2002}]{thm}{uniformLowerBound} 
            \label{thm:uniformLowerBound}
            \hfill \\
            Let $\varepsilon > 0$ and $k \geq k_0(\varepsilon)$. A random hypergraph drawn from $\HUniformExpected$, where $p \cdot \binom{n}{k} \leq \left( \frac{\ln 2}{2} \cdot \left(2^k - 1 \right) - \frac{1 + \varepsilon}{2} \right) \cdot n$ is $2$-colorable a.a.s.
        \end{restatable}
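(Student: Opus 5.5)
The plan is the second moment method applied to the number of \emph{balanced} proper colorings, following Achlioptas and Moore. Restricting to balanced colorings (exactly $n/2$ vertices of each color) is what makes the second moment tractable: it removes the symmetry that otherwise makes the second moment exponentially larger than the square of the first. Since $p\binom{n}{k} = \Theta(n)$, the edge count concentrates. It is therefore convenient to work in $\HUniform$ with $M = cn$, using the standard transfer between $\HUniform$ and $\HUniformExpected$, monotonicity of $2$-colorability, and the slack $\varepsilon$. Let $X$ be the number of balanced colorings $\sigma$ under which no edge is monochromatic. (A natural alternative is to count only colorings in which every edge has at least one vertex of each colour, which are exactly the NAE-satisfying assignments.) By the Paley--Zygmund inequality, $\Prob{X>0} \geq \EV{X}^2/\EV{X^2}$.

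First compute $\EV{X}$. The probability that a random $k$-set is not monochromatic under a balanced $\sigma$ is $1-2^{1-k}+O(1/n)$. Hence
\[
\EV{X} = \binom{n}{n/2}\left(1-2^{1-k}+O(1/n)\right)^{cn} = \Theta(n^{-1/2})\left(2(1-2^{1-k})^c\right)^n.
\]
Next write $\EV{X^2}$ as a sum over pairs $(\sigma,\tau)$ of balanced colorings, grouped by the overlap $\alpha=|\{v:\sigma(v)=\tau(v)\}|/n$. By inclusion--exclusion, the probability that a random edge is bichromatic under both colorings is
\[
q(\alpha)=1-2^{2-k}+2^{1-k}\left(\alpha^k+(1-\alpha)^k\right)
\]
up to lower-order corrections. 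This gives $\EV{X^2}=\sum_{\alpha}\binom{n/2}{\alpha n/2}^2\binom{n}{n/2}\,q(\alpha)^{cn}$, which equals $\mathrm{poly}(n)\cdot\exp\bigl(n\max_\alpha \Lambda(\alpha)\bigr)$ with
\[
\Lambda(\alpha)=\ln 2 + H(\alpha) + c\ln q(\alpha),
\]
where $H$ is the natural-log entropy. At $\alpha=1/2$ we have $q(1/2)=(1-2^{1-k})^2$, so $\Lambda(1/2)$ equals twice the first-moment exponent.

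The main obstacle is showing that $\alpha=1/2$ is the unique global maximizer of $\Lambda$ for all $c$ below the stated threshold. Given uniqueness, a Laplace-method computation shows that the Gaussian correction around $\alpha=1/2$ matches the square of the $\Theta(n^{-1/2})$ factor in $\EV{X}$. Then $\EV{X^2}=O(\EV{X}^2)$, so $X>0$ with probability bounded away from zero. By symmetry, $\Lambda(\alpha)=\Lambda(1-\alpha)$. I would split $[0,1/2]$ into three ranges.
\begin{itemize}
\item Near $1/2$: expand $H$ and $\ln q$. The second derivative is negative because $\alpha^k+(1-\alpha)^k$ is flat to order $k$ around $1/2$ when $k\ge 3$, so $H''(1/2)=-4$ dominates.
\item Intermediate $\alpha$, bounded away from $0$ and $1/2$: here $\alpha^k+(1-\alpha)^k$ is exponentially small in $k$, so $\Lambda(\alpha)\le \ln 2+H(\alpha)+2c\ln(1-2^{1-k})+O(c\,2^{-k}(1-\alpha)^k)$. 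The entropy deficit then beats the perturbation for $k\ge k_0(\varepsilon)$.
\item Near $\alpha=0$: this is where the threshold value of $c$ is decided. Compare $\Lambda(\alpha)$ with $\Lambda(0)=\ln 2+c\ln(1-2^{1-k})$. The condition $\Lambda(0)<\Lambda(1/2)$ reads $\ln 2 + c\ln(1-2^{1-k})>0$. Carefully expanding $\ln(1-2^{1-k})=-2^{1-k}-2^{1-2k}-\dots$ and working on the scale $\alpha\approx 2^{-k}$ near the endpoint yields the precise bound $c<\frac{\ln 2}{2}(2^k-1)-\frac{1+\varepsilon}{2}$.
\end{itemize}
I would follow Achlioptas--Moore's explicit estimates for this last range. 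The $-\frac{1+\varepsilon}{2}$ correction comes exactly from a local bump of $\Lambda$ near $\alpha\approx 2^{-k}$, which must be checked not to exceed $\Lambda(1/2)$.

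Finally, upgrade the positive-probability conclusion to a.a.s. Because $2$-colorability has a sharp threshold (Friedgut, in the form for random $k$-uniform hypergraphs), being colorable with probability bounded below at density $c$ implies a.a.s.\ colorability at density $c-\delta$ for any $\delta>0$. This $\delta$ is absorbed into $\varepsilon$, completing the proof of Theorem~\ref{thm:uniformLowerBound}.
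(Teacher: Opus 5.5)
Your skeleton is the right one. It is the same as in the paper's Appendix~\ref{appendix:proofLowerBound}: equitable colorings, Paley--Zygmund, an overlap exponent symmetric about $1/2$ with $\Lambda(1/2)=2\Lambda(0)$, a split into ranges, then a sharp-threshold upgrade. But the paper does \emph{not} prove this theorem. It quotes it from \cite{AM2002}, and its own argument (Lemma~\ref{lem:nonUniformSecondMoment} with Theorem~\ref{thm:nonUniformLowerBound}) specializes only to densities below $(1-\varepsilon)\frac{\ln 2}{2}(2^k-1)$, as the authors point out.

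The reason is that the paper handles small overlaps crudely, via $B(s)\le B(0)$ on $[0,\delta_I n]$. This requires $\frac1n\ln\EV{X}>H(2\delta_I)\ln 2$ for a $\delta_I$ fixed before $k$ is taken large. At the density of this statement, $\frac1n\ln\EV{X}\to\Lambda(0)=\Theta(2^{-k})$, which would force $\delta_I\lesssim 2^{-k}/k$. Then no concavity is available on the remaining range, since $\Lambda$ has a local maximum near $\alpha=2^{-k}$ and a local minimum near $\alpha\approx\ln k/k$.

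So the whole content of the theorem lies in your third range, and that is exactly the step you do not carry out. ``I would follow Achlioptas--Moore's explicit estimates'' cites the source of the very theorem being proved. Moreover, your second range is only ``bounded away from $0$'', so every scale between $2^{-k}$ and a constant is also left to that missing step, not just $\alpha\approx 2^{-k}$.

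Your description of the mechanism is correct, and the missing estimate is short. By monotonicity, take $c$ at the top of the admissible range, including the slack spent on the upgrade. Put $x=2^{1-k}$, $c^*=\ln 2/(-\ln(1-x))=\frac{\ln 2}{2}(2^k-1)+O(2^{-k})$ and $\delta=c^*-c=\frac{1+\varepsilon'}{2}+O(2^{-k})$ for some $\varepsilon'>0$. Then $\Lambda(0)=\delta\,(-\ln(1-x))\ge(1+\varepsilon')2^{-k}-O(4^{-k})$ and $cx/(1-x)=\ln 2+O(2^{-k})$. Writing $u(\alpha)=1-\alpha^k-(1-\alpha)^k$,
\[
\Lambda(\alpha)-\Lambda(0)=H(\alpha)+c\ln\Bigl(1-\frac{x\,u(\alpha)}{1-x}\Bigr)\le H(\alpha)-\bigl(\ln 2-O(2^{-k})\bigr)u(\alpha).
\]
Now split $[0,\delta_0]$ as follows.
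For $\alpha\le k^2 2^{-k}$, the bounds $H(\alpha)\le\alpha\ln(e/\alpha)$ and $u(\alpha)\ge k\alpha-\binom k2\alpha^2-\alpha^k$ give $\Lambda(\alpha)-\Lambda(0)\le\alpha\bigl(1+\ln(2^{-k}/\alpha)\bigr)+O(k^6 4^{-k})\le 2^{-k}+O(k^6 4^{-k})$, with the peak at $\alpha=2^{-k}$.
The same expansion is negative on $[k^2 2^{-k},k^{-2}]$.
On $[k^{-2},k^{-1}]$, compare $u(\alpha)\ge(1-e^{-1})k\alpha-\alpha^k$ with $H(\alpha)\le\alpha(1+2\ln k)$.
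On $[k^{-1},\delta_0]$, compare $u(\alpha)\ge 1-e^{-1}-\delta_0^k$ with $H(\delta_0)$.

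Hence $\sup_{[0,\delta_0]}\Lambda\le 2\Lambda(0)-\frac{\varepsilon'}{2}2^{-k}<\Lambda(1/2)$ for $k\ge k_0(\varepsilon)$. The bump rises $2^{-k}$ above $\Lambda(0)$, while $\Lambda(1/2)-\Lambda(0)=\Lambda(0)\approx 2\delta\cdot 2^{-k}$; this is exactly why $\delta$ must exceed $\frac12$. With this inserted, your other two ranges (for fixed $\delta_0$ and large $k$), the Laplace step and the Friedgut upgrade complete the proof.

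One minor slip: at this density $\Lambda''(1/2)>0$ for $k=3,4$, so your ``when $k\ge 3$'' is false. This is harmless, since the statement only concerns $k\ge k_0$.
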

        \noindent Note that lower-order terms in $k$ have been further improved by Coja-Oghlan and Zdeborová \cite{CZ2012}.

        The third benchmark is the algorithmic scale. In 2002, AKKT~\cite{AKKT02} gave an algorithm that a.a.s. $2$-colors random $k$-graphs drawn from $\HUniformExpected$, provided that the expected number of edges is at most $c\frac{2^k}{k}n$. Unlike the nearly tight existential bounds, this guarantee is smaller by a factor of order $k$, and it remains the state of the art. The resulting gap between the existence of proper colorings and our ability to find them efficiently is what makes the algorithmic question interesting. The work of Achlioptas and Coja-Oghlan~\cite{ACO08} provides evidence that this gap is not merely an artifact of the analysis; we return to it in the closing section.

        \subsection{Random non-uniform models}
        We work with the natural generalizations of the uniform models, in which every arity carries its own parameter. In the \emph{fixed-edge} model, a prescribed number of edges of each size is chosen uniformly at random; in the \emph{independent-edge} model, each potential edge is included independently with a probability depending on its size.

        Both are parametrized by a \emph{density profile}, a fixed, finitely supported sequence recording how much each permitted size contributes to the normalized density. The profile does not vary with $n$, so the largest permitted edge size stays bounded and every edge still represents a \emph{local} constraint, in the spirit of random CSP. Section~\ref{sec:models} makes the models and this parametrization precise, and explains how our theorems should be read: one first fixes a profile, and only then lets $n$ grow.

        \subsection{Main results}
        All colorability and algorithmic statements below hold a.a.s., with the density profile fixed and $n \to \infty$.

        For the existential question we obtain two asymptotically almost matching bounds. Theorem~\ref{thm:nonUniformUpperBound} states that a normalized density exceeding $\frac{\ln 2}{2}$ implies non-$2$-colorability, regardless of how the edges are distributed among the permitted sizes. Conversely, Theorem~\ref{thm:nonUniformLowerBound} asserts that for every fixed margin below $\frac{\ln 2}{2}$, a proper coloring exists, provided that all permitted sizes are sufficiently large. In both conditions each size contributes through its own coefficient, and the corrections separating them from $\frac{\ln 2}{2}$ vanish as the minimum permitted size grows, so the two bounds meet at the constant known from the uniform model. The upper bound follows from a first-moment argument. The lower bound requires more care, as the second-moment method alone yields only probability bounded away from zero. We upgrade this conclusion using a sharp-threshold argument based on the random CSP framework of Hatami and Molloy~\cite{HM08}.

        The algorithmic side is governed by a quantity of a different kind. AKKT colors two vertices in each round, and the main obstacle to its success is the appearance of partially colored edges which are close to becoming monochromatic. Theorem~\ref{thm:bounded3Tails} states that the algorithm succeeds as long as the expected number of such configurations created in a single round stays below $1$ by a constant margin throughout the execution. This expectation varies as the execution progresses, so unlike the existential bounds, which constrain a single number computed from the whole instance, the condition has to be verified at every stage. Theorem~\ref{thm:AKKTsuccess} expresses it explicitly in terms of the density profile.

        The consequences of this difference become apparent once we observe that each edge size contributes to this expectation within its own window of time, which occurs later and becomes narrower as the size grows. Sizes chosen sufficiently far apart may therefore be loaded almost to capacity without ever competing, which is how Theorem~\ref{thm:UnboundedWeight} provides profiles of arbitrarily large normalized density on which AKKT still succeeds.

    \section{Random non-uniform hypergraphs}
    \label{sec:models}
        
        In the random models considered below, all permitted edge sizes are at least $3$.

        \subsection{Fixed-edge and independent-edge models}
        \begin{definition*}{$H(n;\mathcal M_n)$ model} \hfill \\
            Let $\mathcal M_n$ specify an integer number of edges for each arity, with $0\leq\mathcal M_n(k)\leq\binom nk$. Then $H(n;\mathcal M_n)$ is a distribution of random hypergraphs on a vertex set $V=\{1,\ldots,n\}$ with exactly $\mathcal M_n(k)$ distinct $k$-edges selected uniformly at random. Sets of edges of different arities are chosen independently.
        \end{definition*}

        The distribution $\widehat H(n;\mathcal P_n)$ is defined analogously. Given $\mathcal P_n:\Nat\rightarrow[0,1]$, a hypergraph is sampled by including each potential nonempty edge $e$ independently with probability $\mathcal P_n(|e|)$. In practice, however, it is more convenient to parametrize this model by the expected number of edges for each arity, namely $\binom nk\mathcal P_n(k)$.

        We suppress the subscript $n$ in $\mathcal M$ and $\mathcal P$ when their dependence on $n$ is clear. For brevity, we occasionally abbreviate ``hypergraph'' to ``graph''.

        \subsection{Density profiles}
        Throughout the paper, we parametrize families of these models by a \emph{density profile}, that is, a fixed, finitely supported sequence $(\lambda_k)_{k\geq1}$ of nonnegative real numbers. Each coefficient specifies the contribution of arity $k$ to the normalized density, or a bound on this contribution. To read our theorems, first fix a profile satisfying the stated conditions. For each $n$, this profile then determines the parameters $\mathcal M_n$ or $\mathcal P_n$, or bounds on them, according to the weighting used in that theorem. The conclusion concerns this family of random hypergraphs as $n\to\infty$, with the profile kept fixed.

        With the base weighting, used in existential proofs, a coefficient $\lambda_k$ corresponds to an edge count on the scale
        \[
            \lambda_k\cdot2^k\cdot n,
        \]
        whereas under the weighting used in the algorithmic analysis, it corresponds to an expected edge count on the scale
        \[
            \lambda_k\cdot\frac{2^k}{k}\cdot n.
        \]
        For exact counts or expectations of these forms, the normalized density is $\sum_k\lambda_k$. We use the symbol $\lambda_k$ in both settings, always with the weighting determined by the parametrization in the statement. In the fixed-edge model, the displayed expressions serve as upper or lower bounds on the integer edge counts, as specified by the inequalities in each theorem. This convention avoids unnecessary floor and ceiling notation.

        \subsection{Transfer between the two models}
        The two models are closely related, just as $G(n,M)$ and $G(n,p)$ are. In our regime, standard comparisons allow results on $2$-colorability to be transferred between them, so we choose the model according to the convenience of the argument.

        We use $\HModel$ for the existential bounds, where fixed edge counts simplify the moment calculations, and $\HModelExpected$ for the algorithmic analysis, where independence allows us to expose the instance as the algorithm runs.

    \section{Additivity in the existential problem}
        As discussed in the introduction, the existence of proper $2$-colorings in $k$-graphs is much better understood than its algorithmic counterpart, with the lower and upper bounds being asymptotically almost tight. We therefore begin with the existential question, extending both uniform bounds to profiles with several permitted edge sizes. In both of them the permitted sizes contribute separately and additively, which suggests that non-uniformity plays only a minor role here.

        \subsection{First-moment upper bound}
            Alon and Spencer \cite{ASunpub} obtained their upper bound with a first-moment argument. They showed that with sufficiently many random edges, the expected number of proper $2$-colorings tends to zero, and therefore almost all such graphs are non-$2$-colorable. With the existential weighting, the uniform upper bound says that a random $k$-graph with normalized density exceeding $\frac{\ln 2}{2}$ is a.a.s. non-$2$-colorable. We show that this condition holds in the non-uniform setting as well.
            
            \begin{restatable}[Non-uniform upper bound for 2-colorability]{thm}{nonUniformUpperBound}
                \label{thm:nonUniformUpperBound}
                \hfill \\
                Let $\HModel$ be a hypergraph distribution such that $\mathcal{M}(k) \geq \lambda_k \cdot 2^k \cdot n$ for every $k$. If
                \[
                    \sum_k \left( 1 + 2^{-k} \right) \lambda_k > \frac{\ln 2}{2}
                \]
                then a hypergraph drawn from $\HModel$ is non-$2$-colorable a.a.s.
            \end{restatable}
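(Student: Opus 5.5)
We apply the first-moment method to the number $X$ of proper $2$-colorings of a hypergraph drawn from $\HModel$. By Markov's inequality it suffices to show $\EV{X} \to 0$.

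\emph{Reduction to exact edge counts.} Since $2$-colorability is monotone decreasing under adding edges, we may couple $\HModel$ with the model having exactly $\lceil \lambda_k 2^k n\rceil$ edges of each arity $k$ (when $\lambda_k 2^k n \le \mathcal M(k)$): sample the larger edge sets, then delete uniformly random edges of each arity. So it suffices to treat $\mathcal M(k) = \lceil\lambda_k 2^k n\rceil$. Arities with $\lambda_k=0$ can be ignored, and the profile is finitely supported, so only a bounded number of arities matter.

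\emph{Bounding $\EV{X}$.} Write $\EV{X} = \sum_{c} \Prob{c \text{ proper}}$ over all $2^n$ colorings $c$. Since the edge sets of different arities are independent,
\[
\Prob{c \text{ proper}} = \prod_k \binom{\binom nk - m_k(c)}{\mathcal M(k)} \Big/ \binom{\binom nk}{\mathcal M(k)},
\]
where $m_k(c)=\binom{a}{k}+\binom{n-a}{k}$ is the number of monochromatic $k$-sets and $a$ is the number of red vertices. By convexity, $m_k(c) \ge 2\binom{n/2}{k} = (2^{1-k}-o(1))\binom nk$, with the minimum at balanced colorings. Each ratio is at most $(1 - m_k/\binom nk)^{\mathcal M(k)}$, so
\[
\EV{X} \le 2^n \prod_k \left(1-2^{1-k}+o(1)\right)^{\lambda_k 2^k n} \le \exp\Big( n\big[\ln 2 + \textstyle\sum_k \lambda_k 2^k \ln(1-2^{1-k}) + o(1)\big]\Big).
\]

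\emph{The main step: matching the constant $(1+2^{-k})$.} We need $-2^k\ln(1-2^{1-k}) \ge 2(1+2^{-k})$, which the statement's coefficient suggests. Expanding, $-2^k\ln(1-2^{1-k}) = 2 + 2^{1-k} + \tfrac{8}{3}2^{-2k}\cdot 2^k/2+\dots$. More precisely, $-\ln(1-x)\ge x+x^2/2$ with $x=2^{1-k}$ gives $2^k(x+x^2/2)=2+2\cdot 2^{-k}=2(1+2^{-k})$ exactly. Hence the exponent is at most $n\big[\ln 2 - 2\sum_k(1+2^{-k})\lambda_k + o(1)\big]$. This is $-\Omega(n)$ under the hypothesis, because the hypothesis is a strict inequality with a fixed margin and the profile is fixed. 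Therefore $\EV{X}\to 0$.

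The only delicate point is the $o(1)$ error: the approximation $\binom{n/2}{k}/\binom nk = 2^{-k}(1+O(k^2/n))$ holds uniformly over the finitely many $k$ involved. Together with the ceiling on edge counts, this contributes only $o(n)$ to the exponent, which the fixed margin absorbs.
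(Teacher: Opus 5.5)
Your proof is correct and follows the paper's argument: a first-moment bound over all $2^n$ colorings, the balanced coloring as the minimizer of the number of monochromatic $k$-sets, and $\ln(1-x)\le -x-x^2/2$ at $x=2^{1-k}$, which yields exactly the coefficient $2(1+2^{-k})$. Two of your steps are slightly more careful than the paper's version, which treats edges as sampled with replacement and plugs $\mathcal M(k)$ directly into the exponent: the reduction to exactly $\lceil\lambda_k 2^k n\rceil$ edges, and the bound $\binom{N-m}{M}/\binom{N}{M}\le(1-m/N)^M$. Also, the stray series term $\tfrac{8}{3}2^{-2k}\cdot 2^k/2$ should read $\tfrac{8}{3}2^{-2k}$, though you never use it.
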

            \begin{proof} \hfill \\
                See Appendix~\ref{appendix:proofUpperBound} for the full proof.
            \end{proof}

            The condition depends on the profile only through the single quantity $\sum_k \left( 1 + 2^{-k} \right) \lambda_k$, and not on the way the edges are distributed among the arities. As we show next, the lower bound has the same feature.

        \subsection{Second-moment lower bound}
            For the lower bound, we adapt the second-moment argument of Achlioptas and Moore~\cite{AM2002}. As observed in later work (see, e.g.,~\cite{DFG15}), the calculation can be significantly simplified by focusing on \emph{equitable} colorings, whose two color classes differ in size by at most one.

            \begin{restatable}[Non-uniform lower bound for 2-colorability (second-moment method)]{lem}{nonUniformSecondMoment} 
                \label{lem:nonUniformSecondMoment}
                \hfill \\
                For every $\varepsilon > 0$, there exists $k_0 > 0$ such that the following holds.  Let $\HModel$ be a hypergraph distribution such that $\mathcal{M}(k) \leq \lambda_k \cdot 2^k \cdot n$ for every $k$. If
                \[
                    \sum_k \frac{\lambda_k}{1 - 2^{-k}} < (1 - \varepsilon)\frac{\ln 2}{2}
                    \text{\quad and \quad}
                    \min \{ k: \lambda_k > 0 \} \geq k_0
                \]
                then, for a hypergraph $\mathcal{H} \sim \HModel$
                \[
                    \liminf_{n \to \infty} \Prob{\mathcal{H} \text{ is 2-colorable}} > 0
                \]
            \end{restatable}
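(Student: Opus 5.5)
We apply the second-moment method to $X$, the number of equitable proper $2$-colorings of $\mathcal{H}\sim\HModel$. By Paley--Zygmund,
\[
\Prob{X>0}\ge \frac{\EV{X}^2}{\EV{X^2}},
\]
so it suffices to show $\EV{X^2}\le C\,\EV{X}^2$ for a constant $C$ independent of $n$. We may assume $\mathcal{M}(k)=\lambda_k 2^k n$, since removing edges only helps colorability.

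\textbf{First moment.} Fix an equitable coloring $\sigma$, and write $f_k$ for the fraction of $k$-sets that are not monochromatic under $\sigma$; up to $1+o(1)$ factors, $f_k = 1-2^{1-k}$. Edges of different arities are chosen independently, and within arity $k$ the $\mathcal{M}(k)$ edges are a uniform set without repetition. Hence
\[
\Prob{\sigma \text{ proper}}=\prod_k \binom{f_k\binom nk}{\mathcal M(k)}\Big/\binom{\binom nk}{\mathcal M(k)} ,
\]
and this equals $\Theta(1)\cdot\prod_k f_k^{\mathcal{M}(k)}$ because $\mathcal{M}(k)=O(n)\ll \binom nk^{1/2}$. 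Therefore $\EV{X}=\Theta(n^{-1/2})\,2^n\prod_k(1-2^{1-k})^{\mathcal M(k)}$.

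\textbf{Second moment.} For a pair $(\sigma,\tau)$ of equitable colorings, parametrize by the overlap $\alpha=|\{v:\sigma(v)=\tau(v)\}|/n$. Equitability makes the four joint classes have sizes $\alpha n/2,\alpha n/2,(1-\alpha)n/2,(1-\alpha)n/2$ up to $O(1)$. By inclusion--exclusion, the probability that a random $k$-set is non-monochromatic under both colorings is
\[
q_k(\alpha)=1-2^{2-k}+2^{1-k}\bigl(\alpha^k+(1-\alpha)^k\bigr),
\]
which depends only on $\alpha$. This gives
\[
\EV{X^2}=\sum_{\alpha}\binom{n}{\alpha n}\Theta(\cdot)\prod_k q_k(\alpha)^{\mathcal M(k)},
\]
and $\EV{X^2}/\EV{X}^2$ becomes a Laplace-type sum $\sum_\alpha \mathrm{poly}(n)\,e^{n\Phi(\alpha)}$, where
\[
\Phi(\alpha)=H(\alpha)-\ln 2+\sum_k \lambda_k 2^k\ln\frac{q_k(\alpha)}{(1-2^{1-k})^2},
\]
with $H$ the natural entropy. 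At $\alpha=1/2$ we have $\Phi(1/2)=0$. Moreover $\Phi$ is symmetric, so it is even about $1/2$, and at $1/2$ the first-order term in $\alpha^k+(1-\alpha)^k$ vanishes. Provided $1/2$ is the unique global maximum with $\Phi''(1/2)<0$, the standard Laplace computation gives a constant ratio: the polynomial prefactors of the $\alpha=1/2$ term cancel against $\EV{X}^2$, which is exactly why equitable colorings are used. The $q_k$ factors also need to be handled near $\alpha\in\{0,1\}$, but there $\Phi(1)=\ln 2\cdot 0$-type bookkeeping gives $\Phi(0)=\Phi(1)=-\ln 2+\sum_k\lambda_k2^k\ln\frac{1}{1-2^{1-k}}$. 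This is negative, roughly $-\ln2+2\sum\lambda_k$, and it is exactly where the hypothesis enters: $\sum_k \lambda_k/(1-2^{-k})<(1-\varepsilon)\frac{\ln 2}{2}$ makes it negative with margin.

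\textbf{Main obstacle.} The hard step is proving $\Phi(\alpha)<0$ for all $\alpha\ne 1/2$. I would reduce to the uniform analysis by convexity in the weights. Write $\Phi=\sum_k w_k\Phi_k$ with $w_k=\lambda_k/\Lambda$ and $\Lambda=\sum_k\lambda_k$, where
\[
\Phi_k(\alpha)=H(\alpha)-\ln2+\Lambda 2^k\ln\frac{q_k(\alpha)}{(1-2^{1-k})^2}
\]
is exactly the uniform-$k$ exponent of Achlioptas--Moore at density $\Lambda 2^k n$. Since $\sum_k w_k=1$, $\Phi$ is a convex combination of the $\Phi_k$. Each $\Phi_k$ is negative away from $1/2$ once $\Lambda<(1-\varepsilon)\frac{\ln2}{2}$ and $k\ge k_0(\varepsilon)$, by the uniform analysis underlying Theorem~\ref{thm:uniformLowerBound}, and the factor $1/(1-2^{-k})$ only strengthens this hypothesis. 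I would re-verify the three regimes of that analysis with uniform constants: near $1/2$, a second-order expansion with $2^k k^2(\alpha-\tfrac12)^{2}$-type terms suppressed by $(2\alpha-1)^k$ smallness for $|\alpha-\tfrac12|\le 1/k$; the middle range; and near $0,1$, where $\alpha^k$ matters. Each regime needs bounds uniform over $k\ge k_0$, so that a single $k_0$ works for all arities simultaneously. The convex combination of negative functions is then negative, and $\Phi''(1/2)<0$ follows in the same way, giving the conclusion of Lemma~\ref{lem:nonUniformSecondMoment}.
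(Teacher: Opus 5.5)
Your proposal is correct. It shares the paper's framework: equitable colorings, Paley--Zygmund, and a sum over the overlap whose exponent is your $\Phi$, which is the paper's $F$ before the bound $\ln(1+y)\le y$ is applied. The key analytic step is organized differently. You get $\Phi<0$ on all of $[0,1]\setminus\{1/2\}$ by writing $\Phi=\sum_k w_k\Phi_k$, a convex combination of uniform exponents at the common total density $\Lambda$. You then invoke or redo a three-regime uniform analysis and finish with Laplace's method plus compactness.

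The paper never examines the exponent near overlap $0$ or $1$. Because $B(s)$ is decreasing on $\{0,\dots,n/4\}$ and $A(0)B(0)=\EV{X}^{-1}$, the whole initial range contributes at most $n\,\EV{X}^{-1}2^{nH(2\delta_I)}=o(1)$ once $\EV{X}\ge e^{\gamma n}$. Your observation $\Phi(0)<0$ is exactly that lemma, and it is the only place the hypothesis is used. On $[2\delta_I,1/2]$ the paper proves $F''\le -c$ uniformly, using $\sum_k\lambda_k\le\frac{\ln 2}{2}$ and $\sup_{k\ge k_0}k(k-1)\bigl(2^{2-k}+(1-2\delta_I)^{k-2}\bigr)\to 0$. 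Together with $F(1/2)=F'(1/2)=0$, this gives the quadratic decay directly, with no separate expansion near $1/2$ and no compactness step.

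Your mixture identity is a literal form of the additivity the paper emphasizes. It would transfer any uniform statement of the form ``$\Phi_k<0$ off $1/2$ below density $\Lambda^*$ for $k\ge k_0$'' verbatim. Here it saves little, though. Theorem~\ref{thm:uniformLowerBound} as cited is an a.a.s.\ conclusion, not this analytic fact with $k_0$ depending only on $\varepsilon$, so you must redo the regime analysis anyway. The same termwise bounds, summed with weights $\lambda_k$, then apply to $\Phi$ directly. The paper's monotonicity-plus-first-moment trick lets it skip the delicate competition between $H(\alpha)$ and $\alpha^k$ near the endpoints, which your plan has to handle by hand.
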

            \begin{proof} \hfill \\
                See Appendix~\ref{appendix:proofLowerBound} for a detailed proof.
            \end{proof}

        \subsection{Sharp-threshold upgrade}

            To strengthen Lemma~\ref{lem:nonUniformSecondMoment} to an a.a.s. statement, we show that $2$-colorability attains a sharp threshold under our non-uniform model for each fixed density profile. We embed the corresponding independent-edge model into the random CSP model of Hatami and Molloy~\cite{HM08}, which enables us to apply their necessary condition for a coarse threshold. We then follow Friedgut's argument for the uniform model~\cite{Friedgut2005} to rule out a coarse threshold.

            \phantomsection
            \begin{restatable}[Non-uniform lower bound for 2-colorability]{thm}{nonUniformLowerBound} 
                \label{thm:nonUniformLowerBound}
                \hfill \\
                For every $\varepsilon > 0$, there exists $k_0 > 0$ such that the following holds.  Let $\HModel$ be a hypergraph distribution such that $\mathcal{M}(k) \leq \lambda_k \cdot 2^k \cdot n$ for every $k$. If
                \[
                    \sum_k \frac{\lambda_k}{1 - 2^{-k}} < (1 - \varepsilon)\frac{\ln 2}{2}
                    \text{\quad and \quad}
                    \min \{ k: \lambda_k > 0 \} \geq k_0
                \]
                then, a hypergraph drawn from $\HModel$ is $2$-colorable a.a.s.
            \end{restatable}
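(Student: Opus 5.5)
Our plan is to upgrade Lemma~\ref{lem:nonUniformSecondMoment} from ``probability bounded away from zero'' to ``a.a.s.'' through a sharp-threshold argument in the style of Friedgut. The slack $\varepsilon$ in the hypothesis is what we will exploit. Given $\varepsilon$, we apply Lemma~\ref{lem:nonUniformSecondMoment} with $\varepsilon/2$, obtaining $k_0$. We then scale the profile by a factor $(1+\delta)$ with $\delta>0$ small enough that the scaled profile still satisfies $\sum_k \lambda_k(1+\delta)/(1-2^{-k}) < (1-\varepsilon/2)\frac{\ln 2}{2}$. Consequently, hypergraphs at the larger density $(1+\delta)\lambda$ are $2$-colorable with probability bounded away from zero. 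If $2$-colorability has a sharp threshold along the one-parameter family $t\mapsto t\lambda$ (with the profile shape fixed), then a probability bounded away from zero at $t=1+\delta$ forces probability tending to one at $t=1$. Monotonicity justifies this comparison, since deleting edges preserves colorability and $H(n;\mathcal M)$ with fewer edges can be coupled as a subhypergraph.

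First, we pass to the independent-edge model $\widehat H(n;\mathcal P)$ with $\binom nk \mathcal P(k) = t\lambda_k 2^k n$. The standard transfer recalled in Section~\ref{sec:models} applies here: each arity count concentrates, with fluctuations $O(\sqrt n)$, far inside the $\delta n$ slack, and the counts are independent across arities. So it suffices to prove a sharp threshold in $t$ for the property ``not $2$-colorable'' in $\widehat H$. This model is a random CSP in the sense of Hatami and Molloy~\cite{HM08}. The domain is $\{\text{RED},\text{BLUE}\}$, the constraint types are the NAE relations of arities $k$ in the finite support of the profile, and each type $k$ carries its own intensity $t\lambda_k 2^k n/\binom nk$. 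Because the profile is fixed, there are finitely many bounded-arity constraint types, and all of them are scaled by the same $t$. We would verify that this matches their setup. If their framework allows only a single arity, we would rework it by hand via Friedgut's general criterion~\cite{Friedgut2005} for monotone properties in product spaces of constraints.

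Next, we apply the Hatami--Molloy necessary condition for a coarse threshold. If non-$2$-colorability had a coarse threshold, there would be a fixed-size ``booster'' $F$, a hypergraph on $O(1)$ vertices built from the permitted constraints. Planting $F$ on a random set of vertices would increase the probability of non-colorability by a constant, while $F$ itself has positive probability of appearing only as a structure of density at most that of the model. In other words, $F$ is $2$-colorable, and its appearance probability is $\Theta(1)$ or it is a bounded local structure, such as a tree or a unicyclic component. Following Friedgut's argument for uniform $k$-SAT and NAE, we then rule out each such booster. A bounded-density $F$ has at most one cycle per component. Gluing $F$ onto a random instance therefore forces, in effect, only a bounded number of vertices to take fixed or forced colorings. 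We must then show that forcing $O(1)$ random vertices to prescribed colors does not change the colorability probability by a constant, uniformly in $n$. The standard route is to compare this with adding $O(1)$ extra random edges of minimal arity, a change that is absorbed by an $o(1)$ shift of $t$, which contradicts coarseness.

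The main obstacle is this last step: showing that boosters cannot exist in the non-uniform setting. Friedgut's proof relies on a structural understanding of how forcing a few variables affects satisfiability. Here, several arities interact, and a planted vertex may lie in edges of different sizes. I would first attempt a reduction. A fixed coloring of a bounded set $S$ of random vertices can be simulated by a bounded number of added small-arity edges on $S$, using the smallest permitted arity together with fresh random vertices. The effect of planting is then dominated by that of a slight increase of the density of a single arity. Since the profile enters only through a common scaling $t$, a separate monotone coupling argument would be needed to show that increasing one coordinate by $o(1)$ keeps us within the coarse window.
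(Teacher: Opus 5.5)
Your outer reduction matches the paper's: apply Lemma~\ref{lem:nonUniformSecondMoment} with a smaller margin at the inflated density $(1+\delta)\lambda$, prove a sharp threshold for the common scaling of the profile in the independent-edge model via the Hatami--Molloy coarse-threshold criterion plus Friedgut's booster argument, and transfer back using concentration of edge counts. The first problem is the embedding, which you leave open. The Hatami--Molloy model has a single arity $K$, one parameter $p$, and an $n$-independent distribution $\mathcal Q$ on $K$-ary constraints. So ``NAE relations of several arities, each with its own intensity'' is not an instance of it. The paper's missing idea is padding. Take $K=\max\{k:\lambda_k>0\}$, and let $C_k$ be the $K$-ary constraint imposing not-all-equal on the first $k$ coordinates only. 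Put $\mathcal Q(C_k)\propto\lambda_k2^k$ and $p=n\sum_j\lambda_j2^j/\binom nK$. Each $k$-set arises from $k!(n-k)!/(n-K)!$ ordered tuples, so edges appear independently with probability $(1+o(1))\mathcal P_n(k)$. A coupling then shows the two hypergraphs coincide with probability $1-O(1/n)$. This works because every arity carries $\Theta(n)$ edges, so $\mathcal Q$ does not depend on $n$. Since $p$ scales all $C_k$ together, your worry about moving a single coordinate by $o(1)$ disappears. Your fallback of redoing Friedgut's criterion for non-uniform product spaces is viable, but it is the much longer route the paper only mentions in its closing note.

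The second, more serious problem is that your booster elimination cannot work as sketched. Adding $O(1)$ uniformly random edges to $\widehat H(n;t\mathcal P)$ changes its law by $o(1)$ in total variation, since edge counts already fluctuate on the scale $\sqrt n$. Hence any argument of the form ``planting the booster is no more effective than $O(1)$ random edges'' would rule out boosters for every monotone property. That is false: a planted triangle in $G(n,c/n)$ is a genuine booster. Pinning a pattern on $S$ with not-all-equal edges also needs a rigid, dense planted gadget, which is again a fixed structure, not random edges, and no shift of $t$ absorbs it. The paper defers this step to Friedgut's argument, whose comparison is with $\Theta(\varepsilon n)$ random edges, i.e.\ with condition~2 ($\widehat H(n;(1+\varepsilon)\mathcal P)$ is non-$2$-colorable with probability $<1-2\alpha$). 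The goal is to contradict that condition, not to show that forcing is harmless. The argument runs as follows.
\begin{enumerate}
\item Fix a proper coloring $\tau$ of $M$; it exists because $M$ appears with probability bounded away from zero, so its components are trees or unicyclic.
\item Non-colorability of $G\oplus M$ with probability $>1-\alpha$ means that, for typical $G$, most $r$-tuples of vertices admit no proper coloring of $G$ realizing $\tau$.
\item Every added edge is bichromatic under a proper coloring. So if $e_1,\dots,e_r$ are added edges all of whose transversals are such bad tuples, then $G+e_1+\dots+e_r$ is not $2$-colorable.
\item Among $\varepsilon n$ random edges of any single permitted arity, such a configuration exists w.h.p.\ by supersaturation for complete $r$-partite $r$-graphs (Erd\H{o}s). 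This yields non-colorability with probability $>1-2\alpha$, contradicting condition~2.
\end{enumerate}
Using only one arity here is also why mixing edge sizes causes no difficulty at this step, contrary to the obstacle you anticipated.
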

            See Appendix~\ref{appendix:sharpThreshold} for a detailed proof.

            When $\HModel$ reduces to $H_k(n, M)$, the resulting lower bound recovers the leading-order asymptotics of the uniform bound. The fixed relative margin in our statement does not recover the sharper lower-order terms in~\cite{AM2002,DFG15}, or those obtained by Coja-Oghlan and Zdeborová~\cite{CZ2012} via an analysis of the \emph{condensation transition}.

    \section{Algorithmic 2-coloring} 

        We now analyze AKKT on non-uniform instances drawn from $\HModelExpected$. The algorithm operates on the current coloring of an edge rather than on its original size, so it extends to this setting without any modification. Our analysis follows the same general line as the original one, while simplifying and generalizing several of its arguments. Non-uniformity nevertheless leads to an interesting behaviour unseen in the existential case.

        \subsection{The algorithm and its success condition}
        For convenience, assume that $n$ is even. The algorithm performs $n/2$ rounds, in each of which it selects two uncolored vertices, colors one \emph{RED}, and the other \emph{BLUE}. Call a $k$-edge an $l$-tail if exactly $l$ of its vertices remain uncolored and all $k-l$ already-colored vertices have the same color. The algorithm declares failure whenever a $1$-tail or a monochromatic edge appears. To avoid it, the algorithm focuses on $2$-tails and $3$-tails.

        \algblockdefx{TB}{ETB}[1]{\textbf{do} #1 times }{\textbf{end do}}
        \begin{algorithm}[H]
            \begin{algorithmic}[1]
                \TB{$n/2$}
                    \If{there exists a $3$-tail or $2$-tail $e$}
                        \State pick any two uncolored vertices $x, y \in e$ and color $x$ \emph{RED} and $y$ \emph{BLUE}.
                    \Else
                        \State pick any two uncolored vertices $x,y$ and color $x$ \emph{RED} and $y$ \emph{BLUE}.
                    \EndIf
                \ETB
            \end{algorithmic}
            \caption{AKKT algorithm}
        \end{algorithm}

        The whole analysis rests on a single quantity, namely the expected number of $3$-tails created in one round. As long as it remains below $1$ by a constant margin, the algorithm succeeds. The structural properties we establish along the way, such as short positive intervals, few concurrently active edges and the lack of complex intersections between them, are consequences of this bound rather than separate assumptions.

        \begin{restatable}[AKKT success condition]{thm}{bounded3Tails} 
            \label{thm:bounded3Tails}
            \hfill \\
            Let $\HModelExpected$ be a hypergraph distribution such that $\min \{ k: \mathcal{P}(k) > 0 \} \geq 5$ and the expected number of newly created $3$-tails after each round of AKKT is bounded by $1 - \varepsilon$ for some $\varepsilon > 0$. Then AKKT succeeds a.a.s.
        \end{restatable}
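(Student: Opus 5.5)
We will follow the deferred-decisions analysis of AKKT, exposing the hypergraph from $\HModelExpected$ only as the algorithm needs it. Concretely, we fix the (deterministic, given the exposed information) rule for choosing the pair $x,y$ in each round. We then observe that, conditioned on the history, every unexposed potential edge whose colored part is monochromatic is still present independently with probability $\mathcal P(k)$. An edge becomes relevant only when it turns into a $3$-tail; before that, the only information revealed about it is that it has not yet become a $3$-tail.

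The evolution of $3$- and $2$-tails is then dominated by a branching-type process. In each round the number of newly created $3$-tails is stochastically dominated by a sum of independent Bernoulli variables with mean at most $1-\varepsilon$, using the hypothesis of the theorem. Each round consumes one pending tail, so the pending tails behave like a queue with subcritical arrival rate.

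We define a \emph{positive interval} as a maximal run of rounds during which some $3$- or $2$-tail exists. We show via a Chernoff bound on a random walk with drift $-\varepsilon$ that a.a.s. every positive interval has length $\Oh(\log n)$. It follows that at any moment only $\Oh(\log n)$ tails are active.

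The algorithm fails only when a $2$-tail loses a vertex without being handled, i.e., when a $1$-tail or a monochromatic edge appears. When a $3$-tail $e$ is processed, two of its vertices receive opposite colors, so $e$ is satisfied. Failure therefore requires either two active tails sharing a vertex in a bad way, or a $3$-tail becoming a $2$-tail through a vertex colored while handling another tail. Such configurations are \emph{complex intersections}: two edges active in the same positive interval sharing uncolored vertices.

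We bound their expected number by a first-moment count. A fixed pair of intersecting tails within a window of length $\Oh(\log n)$ requires an extra coincidence in the choice of vertices, costing a factor of order $(\log n)^{\Oh(1)}/n$ relative to the independent case. Summed over $n/2$ rounds, this gives $\oh(1)$ only if the per-edge probabilities are small enough. This is where $\min\{k:\mathcal P(k)>0\}\ge 5$ is used: an edge sharing two uncolored vertices with another active edge must still have enough free vertices that its own contribution carries an extra factor $1/n$. We will also need a separate first-moment bound showing that a.a.s. no small vertex set spans too many edges, to rule out dense local structures.

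The main obstacle is the dependence created by conditioning on the history: edges that did not become tails are biased. We handle this by monotonicity. Conditioning that an edge is not yet a $3$-tail only decreases the probability of its future monochromatic events, so upper bounds computed with the unconditional $\mathcal P(k)$ remain valid. With this, the drift bound, the interval-length bound and the complex-intersection bound combine by a union bound to give success a.a.s.
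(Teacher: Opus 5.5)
Your exposure scheme and your bound on positive intervals are fine and match the paper. The paper uses $\ln^2 n$ rather than $\Oh(\log n)$, which is immaterial. The conditioning bias you call the main obstacle does not arise: each $k$-set's coin is flipped exactly once, in the unique round in which it could become a $3$-tail, and nothing about it is revealed earlier.

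The genuine gap is in the failure analysis. You count ``a $3$-tail becoming a $2$-tail through a vertex colored while handling another tail'' as a failure route. You then define complex intersections as any two active tails sharing an uncolored vertex, and plan to exclude them by a first-moment count. But single-vertex intersections are not rare. A new tail is a uniformly random triple among the $m=n-2j$ uncolored vertices, independent of the currently active tails. So whenever some tail is active, it meets one with probability of order $1/m$. Summing over the $\Theta(n)$ tails created during the run gives an expected number of intersections of order $1$, not $o(1)$.

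Your own arithmetic shows this: a factor $(\log n)^{\Oh(1)}/n$ summed over $n/2$ rounds is $(\log n)^{\Oh(1)}$. The hypothesis $\min\{k:\mathcal{P}(k)>0\}\geq 5$ cannot supply the missing $1/n$. Once an edge has become a $3$-tail, its original size has no influence on where its three uncolored vertices lie. In the paper this hypothesis enters only through the exposure computation of Lemma~\ref{lem:algorithmicPoisson}.

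The missing idea is that single intersections are harmless, so failure needs two coincidences. Suppose every active tail meets at most one other active tail, and in at most one vertex. Then a $3$-tail can lose at most one vertex, through the processing of its unique neighbour. Afterwards no other tail touches its two remaining vertices, and there are no free moves while it is active. So the algorithm eventually processes it and it is satisfied. Hence a $1$-tail requires two tails sharing at least two vertices, or one tail meeting two others. A given pair or triple of tails does this with probability $\Oh(m^{-2})$.

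For this to help, you also need $m$ to be large whenever tails exist, which your plan never controls. A uniform factor $1/n$ tacitly assumes $m=\Theta(n)$, which fails in the final rounds. The paper obtains this from the factor $(1-\alpha)^3$ in the per-round expectation: no new tails appear in the last $\frac{1}{2}n^{2/3}$ rounds (Proposition~\ref{lem:tailFreeAlgorithmic}). Thus $m\geq n^{2/3}$ and each bad pair or triple has probability $\Oh(n^{-4/3})$. A union bound over the polylogarithmically many pairs and triples in each positive interval, and over at most $n/2$ intervals, gives $o(1)$ (Proposition~\ref{lem:simpleIntersectionsAlgorithmic}). Your additional bound on dense local structures is not needed.
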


        \subsection{Dynamic exposure and new 3-tails}
            To analyze AKKT on non-uniform instances, view its execution as a dynamic exposure process. Since edges are sampled independently in $\HModelExpected$, instead of starting with the whole graph, we can reveal them as they become 3-tails. 
            
            Throughout the execution, we will track an \emph{auxiliary} $3$-uniform multihypergraph on the same set of vertices as the one AKKT tries to $2$-color. Initially, the auxiliary hypergraph is empty. Whenever an edge in the original hypergraph becomes a $3$-tail, we add a $3$-edge on its uncolored vertices to the auxiliary hypergraph.          

            Examine more closely $3$-tails created after the $j$-th step, that is, when exactly $2j+2$ vertices are colored. First, we narrow down the analysis to $3$-tails originating from edges of arity $k$. With $n-2j-2$ uncolored vertices left, there are $\binom{n-2j-2}{3}$ triples, which are at risk of becoming new $3$-tails. For such a triple to become a $3$-tail, it must be a part of some original $k$-edge, in which exactly one vertex was colored in the current round. Therefore, each triple can be completed to a potential $k$-edge in $2\binom{j}{k-4}$ ways. Note that by this procedure, each $k$-subset of the vertices will be examined at most once. This enables us to delay the \emph{coin flip} for each edge until some triple is at risk of becoming a $3$-tail through it. The following lemma greatly simplifies the subsequent analysis.

            \begin{restatable}[Expected number of new $3$-tails]{lem}{algorithmicPoisson} 
                \label{lem:algorithmicPoisson} 
                \hfill\\ 
                Suppose that a hypergraph is sampled from $H_k(n,p_k)$, where \[ \binom{n}{k}p_k = \lambda_k \frac{2^k}{k}n, \] with $k\geq 5$. Let $j=j(n)$ and set \[ \alpha_n:=\frac{2j(n)}{n}. \] Let $X$ count the $3$-tails appearing after the $j$-th step of AKKT. Then 
                \[ 
                    \EV{X} = \frac{16}{3}\lambda_k(k-1)(k-2)(k-3) \alpha_n^{k-4}(1-\alpha_n)^3+o(1) 
                \] 
                as $n\to\infty$. 
            \end{restatable}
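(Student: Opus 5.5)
We compute $\EV{X}$ by linearity over the $\binom{n-2j-2}{3}$ uncolored triples, using the deferred coin-flip description of the exposure process given just before the statement. Fix an uncolored triple $T$ after the $j$-th step. For $T$ to become a new $3$-tail through a $k$-edge $e$, the edge must consist of $T$, exactly one of the two vertices colored in the current round, and $k-4$ further vertices colored earlier, all sharing that color. The already-colored set before the current round has $2j$ vertices, $j$ of each color, since every round colors one vertex RED and one BLUE. Choosing which of the two new vertices is used (two ways) fixes the color, and then the remaining $k-4$ vertices must be chosen among the $j$ earlier vertices of that color. This gives the count $2\binom{j}{k-4}$ of potential edges per triple. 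The key point is that each such $k$-set is examined for the first time exactly at this moment. Before this round it had at most $4$ uncolored vertices, and it was never a $3$-tail or $2$-tail candidate whose coin had been flipped. It is also not already known to be absent, since its vertices of the current color were all monochromatic. So each potential edge is present independently with probability $p_k$, unconditioned by the history, and $\EV{X}=\binom{n-2j-2}{3}\cdot 2\binom{j}{k-4}\cdot p_k$ exactly, or up to a lower-order correction if one counts triples with multiplicity in the auxiliary multigraph.

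The honest obstacle is exactly this independence claim. The algorithm's choices depend on previously exposed edges, so we must argue that whether a $k$-set $e$ of the above form was exposed earlier is impossible. An edge is exposed only at the first round in which it becomes a $3$-tail candidate, that is, when it has exactly $3$ uncolored vertices with the rest monochromatic. At the previous round $e$ had $4$ uncolored vertices, so it could not have been exposed. Also, $e$ cannot have been exposed through a different triple, because its set of colored vertices changes only when it is hit. This shows every potential $k$-edge is flipped at most once, as the text notes. Hence the count is deterministic given the history, and linearity gives the formula with no concentration needed. One also needs the convention that $3$-tails created in the same round through different edges are counted separately; multiplicities only help the formula.

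The rest is asymptotics. Write $p_k=\lambda_k\frac{2^k}{k}n/\binom nk$, $j=\alpha_n n/2$, and $n-2j-2=(1-\alpha_n)n-2$. We use $\binom{n-2j-2}{3}=\frac{(1-\alpha_n)^3n^3}{6}+O(n^2)$ and $\binom{j}{k-4}=\frac{(\alpha_n n/2)^{k-4}}{(k-4)!}+O(n^{k-5})$, together with $\binom nk=\frac{n^k}{k!}(1+O(1/n))$. Multiplying gives
\[
\frac{(1-\alpha_n)^3}{6}\cdot\frac{2\alpha_n^{k-4}}{2^{k-4}(k-4)!}\cdot\lambda_k\frac{2^k}{k}\,k!\,(1+O(1/n)).
\]
Here $2^k/2^{k-4}=16$, $k!/(k\,(k-4)!)=(k-1)(k-2)(k-3)$, and $2/6=1/3$, which yields $\frac{16}{3}\lambda_k(k-1)(k-2)(k-3)\alpha_n^{k-4}(1-\alpha_n)^3$. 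Because $\alpha_n\in[0,1]$ and $k$ is fixed, the error terms are $O(1/n)=o(1)$ uniformly in $j$.

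The main care is in the independence and no-double-exposure argument; the computation itself is routine.
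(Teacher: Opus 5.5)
Your proposal is correct and follows the paper's proof: you write $X$ as a sum of $2\binom{n-2j-2}{3}\binom{j}{k-4}$ indicators that are independent $\mathrm{Bernoulli}(p_k)$ by deferred exposure, where the paper simply asserts $X\sim\operatorname{Bin}\bigl(2\binom{n-2j-2}{3}\binom{j}{k-4},p_k\bigr)$ using the discussion before the lemma, and you then carry out the same asymptotic simplification. Your added checks (each $k$-set is flipped at most once, and the error is an additive $O(1/n)$ uniformly in $j$, which is how your displayed factor $(1+O(1/n))$ should be read) make explicit what the paper leaves implicit.
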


            \begin{proof}
                \hfill \\
                Since $X \sim \operatorname{Bin}\left( 2\binom{n-2j-2}{3}\binom{j}{k-4}, p_k\right)$,
                \[
                    \EV{X} = \binom{n-2j-2}{3} \cdot 2\binom{j}{k-4} \cdot p_k
                    \sim \frac{(n-2j-2)^3}{6} \cdot 2 \frac{j^{k-4}}{(k-4)!} \cdot \lambda_k \frac{2^k}{k} \frac{n}{\binom{n}{k}}
                \]
                \[
                    \sim \frac{1}{3}n^3 \left( 1 - \frac{2j}{n} \right)^3 \frac{\left( \frac{2j}{n} \right)^{k-4}}{ (k-4)! } \left( \frac{n}{2} \right)^{k-4} \cdot \lambda_k \frac{2^k}{k} n \cdot \frac{k!}{n^{k}}
                    = \frac{16}{3} \lambda_k \left( 1 - \frac{2j}{n} \right)^{3} \left( \frac{2j}{n} \right)^{k-4} (k-1)(k-2)(k-3)
                \]
                \[
                    = \frac{16}{3} \lambda_k (1-\alpha_n)^3 \alpha_n^{k-4} (k-1)(k-2)(k-3)
                \]
                Note that for $\alpha_n = \oh(1)$ or $\alpha_n = 1 - \oh(1)$, the expected count of new 3-tails is $\oh(1)$. Treating the formula for $\EV{X}$  as a function of $\alpha_n$, differentiation shows that it is unimodal and maximized at $1 - \frac{3}{k-1}$.
            \end{proof}

        \subsection{Active 3-tails}
            An edge in the auxiliary hypergraph is called \emph{active} if it can still become monochromatic under the current partial coloring. We first bound the number of active edges and then control their intersections.
            
            A \emph{positive interval} is a maximal consecutive sequence of rounds during which at least one edge is active. The next proposition bounds both its length and the number of concurrently active edges.

            \begin{restatable}[Short positive intervals and few active edges]{prop}{shortIntervalsAlgorithmic} 
                \label{lem:shortIntervalsAlgorithmic}
                \hfill

                W.h.p., the length of any positive interval is at most $\ln^2 n$. Moreover, at most $\ln^2 n$ tails are created during each positive interval.
            \end{restatable}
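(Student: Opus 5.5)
Under the hypotheses of Theorem~\ref{thm:bounded3Tails}, the plan is to dominate the process of active tails by a subcritical branching (Galton--Watson) process and use its exponential tail bounds.

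\textbf{Step 1: domination.} Fix a round $r$ at which no edge is active, and consider the positive interval starting at $r$. In each round of this interval the algorithm colors two vertices of some active tail. By Lemma~\ref{lem:algorithmicPoisson}, summed over the finitely many arities of the profile, the number of new $3$-tails created in that round has expectation at most $1-\varepsilon$. Under the delayed coin-flip exposure, each potential $k$-edge is examined at most once, so this number is a sum of independent Bernoulli variables. Hence, conditionally on the history, it is stochastically dominated by $\mathrm{Bin}(N,p)$ with $Np\le 1-\varepsilon$, and therefore by a $\mathrm{Poisson}(1-\varepsilon')$ variable up to negligible error.

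Next, I need to bound how many rounds a single tail can keep the interval alive. A $3$-tail stays active until the algorithm colors two of its vertices with different colors; once the algorithm picks it, that happens in one round. It can also degenerate into a $2$-tail, which again needs one round. So each tail "consumes" $\Oh(1)$ rounds. The length $L$ of the positive interval is therefore at most a constant times the total progeny of a Galton--Watson process with offspring mean at most $1-\varepsilon$, where each round spawns independent new tails.

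\textbf{Step 2: tail bound.} Let $T$ be the total number of tails created in the interval, and let $\xi_1,\xi_2,\dots$ be the per-round creation counts. The event $\{L\ge t\}$ forces $\sum_{i\le t}\xi_i \ge t/c - 1$, where $c$ is the constant from Step~1. Each $\xi_i$ is dominated by an independent Poisson variable of mean $1-\varepsilon$, so a Chernoff bound gives
\[
\Prob{L\ge t}\le \exp(-\delta t)
\]
for some $\delta=\delta(\varepsilon)>0$. Taking $t=\ln^2 n$ yields probability $n^{-\omega(1)}$. A union bound over the at most $n/2$ possible starting rounds completes the first claim. The same Chernoff bound applied to $\sum_{i\le \ln^2 n}\xi_i$ shows that at most $\ln^2 n$ tails are created, again with superpolynomially small failure probability. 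This gives the second claim, possibly after using $(1-\varepsilon)\ln^2 n+o(\ln^2 n)$ slack, or bounding with $\tfrac12\ln^2 n$ rounds if the constant $c$ requires it.

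\textbf{Main obstacle.} The delicate step is the accounting in Step~1: showing rigorously that every round of a positive interval must "use up" a tail, so that the interval cannot outlive its tails. Two complications arise. Active tails that share vertices could be resolved or transformed together, and a tail with only one uncolored vertex left means failure rather than deactivation. I would handle this by defining a potential, namely the number of uncolored vertices over all active tails, which drops by at least $1$ whenever the algorithm acts on a tail and increases by $3$ per new tail. Then
\[
L \le 3T+\Oh(1),
\]
and this inequality holds regardless of intersections. Failure events are simply excluded here, since they are treated separately later. It also requires the conditional domination to hold despite adaptivity: the algorithm's choices depend on previously exposed edges. This is fine, because the only randomness revealed in each round concerns edges never examined before, which is exactly what the exposure argument guarantees.
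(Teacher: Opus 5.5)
\textbf{The gap is the constant in your bound $L\le cT$.} Your Chernoff step needs the threshold $t/c$ to exceed the mean $(1-\varepsilon)t$ of $\sum_{i\le t}\xi_i$, that is, $c<1/(1-\varepsilon)$. Your potential (uncolored vertices over all active tails, $+3$ per new tail, $-1$ per round) only gives $L\le 3T+\Oh(1)$. Then $\{L\ge t\}$ forces merely $\sum_{i\le t}\xi_i\ge t/3-1$, which is typical rather than a large deviation whenever $1-\varepsilon>1/3$.

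The same flaw appears in your branching picture. If each tail can sustain $c$ rounds and each round spawns new tails with mean $1-\varepsilon$, the effective offspring mean per tail is $c(1-\varepsilon)$. That is supercritical for $c=3$, so the domination by a subcritical Galton--Watson process does not hold. Shrinking the window to $\tfrac12\ln^2 n$ rounds does not help, because the problem is the ratio of threshold to mean, not the window length. The factor enters through your heuristic that a tail ``may also degenerate into a $2$-tail, which again needs one round.'' Degeneration happens as a side effect of a round spent on another tail and never costs a round of its own. Since the second claim relies on the interval length being at most $\ln^2 n$, it fails with the first.

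\textbf{The fix, which is the paper's argument:} use the number of active tails itself as the potential. In every round of a positive interval, the algorithm picks an active $3$- or $2$-tail and colors two of its uncolored vertices \emph{RED} and \emph{BLUE}. That edge becomes bichromatic and stops being active immediately, regardless of how it intersects other tails. So each round deactivates at least one tail. A positive interval of length at least $t$ therefore needs at least $t$ new tails within $t$ consecutive rounds, up to an off-by-one for the tail that opens the interval.

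The per-round counts are independent sums of Bernoulli variables, each with mean at most $1-\varepsilon$. The Chernoff bound then gives probability at most $\bigl((1-\varepsilon)e^{\varepsilon}\bigr)^{t}\le e^{-\varepsilon^2 t/2}$, which is $\oh(n^{-1})$ for $t=\ln^2 n$. A union bound over the $n/2$ starting rounds gives the first claim. The same estimate for the number of tails created in a window of $\ln^2 n$ rounds gives the second. The rest of your setup is fine: the exposure, the independence despite adaptivity, and the union bound.
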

            \begin{proof}
                \hfill 

                Recall that the count of new $3$-tails after each round is a sum of independent Bernoulli random variables, as is the total count of new $3$-tails in any interval.

                Suppose that some positive interval has length at least $\ln^2 n$. As in every round at least one active edge is chosen by the algorithm and ceases to be active, at least $\ln^2 n$ new $3$-tails must appear in the first $\ln^2 n$ steps of this interval, as otherwise, the interval would be shorter. The probability of such an event can be bounded using a Chernoff bound. Let $X$ be a random variable counting new $3$-tails in a specific interval of length $\ln^2 n$. Using only the fact that the expected number of new $3$-tails in each step is at most $1-\varepsilon$ we can bound the probability of $X$ being large as follows:
                \[
                    \Prob{ X \geq \ln^2 n }
                    \leq \left( \frac{e^{ \frac{\varepsilon}{1 - \varepsilon} }}{ \left( 1 + \frac{\varepsilon}{1 - \varepsilon} \right)^{1 + \frac{\varepsilon}{1 - \varepsilon}} } \right)^{(1-\varepsilon)\ln^2 n}
                    = \left( (1-\varepsilon)e^\varepsilon \right)^{\ln^2 n}
                    \leq \exp \left(-\frac{\varepsilon^2}{2} \ln^2 n \right) = \oh\left( n^{-1} \right)
                \]
                where the last inequality uses $\ln \left(1 - \varepsilon\right) \leq -\varepsilon - \frac{\varepsilon^2}{2}$.

                There are at most $n/2$ starting rounds for such an interval, therefore by the union bound, the probability that any interval is of length at least $\ln^2 n$ is $\oh(1)$. By the same reasoning we can bound the total count of active edges in each interval.
            \end{proof}

            Since only a few edges are active at the same time, complex intersections between them are unlikely as long as sufficiently many vertices remain uncolored. Next, we show that with high probability, no tails appear in the final rounds. Consequently, whenever we see a new tail, we may safely assume that there are many uncolored vertices left.

            \begin{restatable}[Final rounds are tail-free]{prop}{tailFreeAlgorithmic} 
                \label{lem:tailFreeAlgorithmic}
                \hfill

                W.h.p., there are no new active edges in the final $\frac{1}{2}n^{2/3}$ rounds.
            \end{restatable}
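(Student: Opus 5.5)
The plan is a first-moment bound: sum, over the final rounds, the expected number of newly created $3$-tails, and show the total is $\oh(1)$. A new active edge in the auxiliary hypergraph is created exactly when some original edge becomes a $3$-tail, so by Markov's inequality it suffices to prove that the expected number of $3$-tails created in the last $\frac12 n^{2/3}$ rounds tends to zero.

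\textbf{Step 1: a per-round bound that keeps the $n$-dependence.} Fix an arity $k\ge5$ with $\mathcal P(k)>0$; there are finitely many such $k$, since the profile is fixed. As in the proof of Lemma~\ref{lem:algorithmicPoisson}, the number of $3$-tails of arity $k$ created after step $j$ is binomial with mean
\[
\binom{n-2j-2}{3}\cdot 2\binom{j}{k-4}\, p_k .
\]
Rather than passing to the asymptotic formula, which carries an additive $\oh(1)$ error, I will use the exact upper bound
\[
\binom{n-2j-2}{3}\le \frac{(n-2j)^3}{6},
\]
together with $2\binom{j}{k-4}p_k = \Oh(n^{-3})$. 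The latter follows from $\binom nk p_k=\lambda_k\frac{2^k}{k}n$ and $j\le n/2$. This gives a per-round expectation of $\Oh\!\bigl((n-2j)^3/n^3\bigr)=\Oh\!\bigl((1-\alpha_n)^3\bigr)$, with the implied constant depending only on $k$ and $\lambda_k$.

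\textbf{Step 2: sum over the final rounds.} In the last $\frac12 n^{2/3}$ rounds there are at most $n^{2/3}$ uncolored vertices, so $n-2j\le n^{2/3}$ and each round contributes $\Oh(n^{2}/n^3)=\Oh(n^{-1})$. Summing over $\frac12 n^{2/3}$ rounds gives $\Oh(n^{-1/3})=\oh(1)$. A slightly sharper version uses $\sum_{m\le n^{2/3}} m^3/n^3 = \Oh(n^{8/3}/n^3)$, which yields the same conclusion. Summing over the finitely many arities keeps the total $\oh(1)$. Markov's inequality then shows that w.h.p.\ no new $3$-tail, and hence no new active edge, appears in these rounds.

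\textbf{Main obstacle.} The delicate point is justifying that the binomial description is valid at every step, whatever choices the algorithm makes. This is the deferred-decision argument: each $k$-subset is examined at most once, at the moment a triple is at risk of becoming a $3$-tail through it. The coin flips are therefore fresh and independent of the history, so the expectation bound holds conditionally at each step and hence unconditionally. Since the paper already states this exposure property before Lemma~\ref{lem:algorithmicPoisson}, I would invoke it and only check that the upper bound on the count of candidate triples and completions is uniform in $j$. The remaining care is to avoid the $\oh(1)$ term of Lemma~\ref{lem:algorithmicPoisson}, which is not summable over $n^{2/3}$ rounds; this is why Step~1 works with the explicit polynomial bound instead.
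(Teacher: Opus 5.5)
Your proposal is correct and takes essentially the same first-moment route as the paper: bound the expected number of new $3$-tails per round in the final stretch by $\Oh(n^{-1})$, multiply by $\frac12 n^{2/3}$ rounds to get $\Oh(n^{-1/3})$, and apply Markov's inequality. The only difference is that you bound the exact binomial mean directly via $\binom{n-2j-2}{3}\le (n-2j)^3/6$. The paper instead evaluates the formula of Lemma~\ref{lem:algorithmicPoisson} at $1-\alpha=n^{-1/3}$ and uses unimodality, so your version has the small advantage of sidestepping the additive $o(1)$ in that lemma's statement, which would not be summable over $n^{2/3}$ rounds.
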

            \begin{proof}
                \hfill

                For every edge arity $k$, the expected number of new $3$-tails as a function of $\alpha$ is unimodal and attains its maximum long before the interval under consideration.
                Thus, by Lemma~\ref{lem:algorithmicPoisson}, the expected count of new $3$-tails in the final $\frac{1}{2}n^{2/3}$ rounds can be bounded by 
                \[
                    \frac{1}{2}n^{2/3} \cdot \sum_k \frac{16}{3} \lambda_k \left( n^{-1/3} \right)^3 \left( 1 - n^{-1/3} \right)^{k-4} (k-1)(k-2)(k-3)
                    = \Oh \left( n^{-1/3} \right)
                \]
            \end{proof}

            A $1$-tail can appear for two reasons: two active tails may share two vertices, or one tail may intersect two others on two different vertices. The following proposition utilizes the abundance of uncolored vertices to show that both configurations are unlikely.

            \begin{restatable}[Active tails have no complex intersections]{prop}{simpleIntersectionsAlgorithmic} 
                \label{lem:simpleIntersectionsAlgorithmic}
                \hfill 

                W.h.p., throughout the execution, every active tail intersects at most one other active tail, and any two tails have at most one common vertex.
            \end{restatable}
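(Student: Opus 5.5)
We argue by conditioning on the good events already established. By Proposition~\ref{lem:shortIntervalsAlgorithmic}, w.h.p.\ every positive interval has length at most $\ln^2 n$ and contains at most $\ln^2 n$ newly created tails. By Proposition~\ref{lem:tailFreeAlgorithmic}, w.h.p.\ no new active edge is created in the final $\frac12 n^{2/3}$ rounds. Hence whenever a new $3$-tail is created, at least $n^{2/3}$ vertices are still uncolored. It therefore suffices to bound the probability that a \emph{newly created} $3$-tail hits the at most $3\ln^2 n$ uncolored vertices of the tails that are still active at its creation time. A bad configuration involves at most two active tails other than the new one, so it requires that the new tail either shares $\geq 2$ vertices with one active tail, or meets two distinct active tails.

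The key step is to use the dynamic exposure. At the moment a triple $T$ of uncolored vertices becomes a $3$-tail, it does so because an independent coin for some $k$-set was flipped, and that coin is flipped only at this round. Conditional on the history, the set of triples at risk in round $j$ is symmetric under permutations of the uncolored vertices, apart from the constraint that the triple completes to a $k$-edge with exactly one vertex colored in the current round. That constraint involves only colored vertices. So, given that a $3$-tail is created from arity $k$ in this round, its triple is uniform over the $\binom{n-2j-2}{3}$ uncolored triples. Let $m:=n-2j-2\geq n^{2/3}$ and let $A$ be the set of uncolored vertices of currently active tails, with $|A|\leq 3\ln^2 n$. Then
\[
\Prob{|T\cap A|\geq 2}=\Oh\!\left(\frac{|A|^2}{m^2}\right)=\Oh\!\left(\frac{\ln^4 n}{n^{4/3}}\right).
\]
This bound covers both sharing two vertices with a single tail and touching two different tails. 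If those two tails were themselves disjoint, touching both still needs two vertices of $T$ in $A$. If they overlap in one vertex $v$, meeting both through $v$ alone gives the configuration ``one tail intersects two others''. For that case we charge the earlier creation: when the second of the two older tails appeared, it already hit $A$ at $v$. So we bound instead the event that a tail hits $A$ at all while some active tail already has an intersection, which has probability $\Oh(|A|/m)$ per occurrence.

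Next we sum over all creations. The expected total number of $3$-tails is at most $(1-\varepsilon)n/2$. The events of two-vertex hits therefore have total probability $\Oh(n\cdot \ln^4 n\cdot n^{-4/3})=o(1)$. For ``one tail intersects two others'', within a positive interval we need two separate single-vertex hits, each of probability $\Oh(\ln^2 n/m)$. We union bound over intervals and over the at most $\binom{\ln^2 n}{2}$ pairs of tails created in the interval. Since single-vertex hit probability is $\Oh(\ln^2 n\, n^{-2/3})$ per tail, the expected number of intervals with two hits is about
\[
n\cdot \ln^4 n \cdot(\ln^2 n\, n^{-2/3})^2=\Oh(\ln^8 n\cdot n^{-1/3})=o(1).
\]

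The main obstacle is justifying the near-uniformity of the new triple while conditioning on the algorithm's adaptive choices and on the good events. The algorithm chooses which vertices to color depending on previously exposed tails, and several $k$-sets may be flipped for the same triple. I would handle this with a per-triple bound rather than exact uniformity. For any fixed triple $T$ of uncolored vertices, the probability that it becomes a $3$-tail in the round is at most the per-round expectation divided by $\binom{m}{3}$, up to a $(1+o(1))$ factor. This follows from Lemma~\ref{lem:algorithmicPoisson}'s count, since every triple has exactly $2\binom{j}{k-4}$ completions. We then sum over triples meeting $A$ in the required way, which gives the bounds above without needing exact uniformity. Finally we remove the conditioning on the events of Propositions~\ref{lem:shortIntervalsAlgorithmic} and~\ref{lem:tailFreeAlgorithmic} by stopping the process when either fails, which adds only $o(1)$ to the failure probability.
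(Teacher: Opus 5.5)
Your proposal is correct and follows essentially the same route as the paper. Both use a union bound within positive intervals, with at most $\ln^2 n$ tails per interval and at least $n^{2/3}$ uncolored vertices: a two-vertex coincidence costs $\Oh(n^{-4/3})$, and a chained single-vertex intersection costs a product of two $\Oh(n^{-2/3})$ factors. Your dynamic-exposure justification, that conditional on the history every uncolored triple becomes a new $3$-tail with the same probability (the per-round expectation divided by $\binom{m}{3}$), makes explicit a step the paper leaves implicit.
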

            \begin{proof}
                \hfill 

                Recall that w.h.p. at most $\ln^2 n$ tails appear in each positive interval, and every new tail appears while at least $\frac{1}{2} n^{2/3}$ vertices remain uncolored.

                A given pair of tails shares at least two vertices with probability $\Oh(n^{-4/3})$. 
                
                Now consider three tails, and their order of appearance. If the first two are disjoint, the third must meet both, with probability $\Oh(n^{-4/3})$. Otherwise, the first intersection and the third tail meeting their union each contribute a factor $\Oh(n^{-2/3})$. Therefore, the probability that one of the three tails intersects both others can again be bounded by $\Oh(n^{-4/3})$.

                Taking a union bound over all pairs and triples in each positive interval, and over at most $n/2$ intervals, completes the proof.
            \end{proof}

            Since an edge must become a $1$-tail before becoming monochromatic, this proves Theorem~\ref{thm:bounded3Tails}. 

        \subsection{An explicit sufficient condition}
            With the help of Lemma~\ref{lem:algorithmicPoisson}, we can express the expectation in terms of the density profile. The following is a direct consequence of the expectation bound in Theorem~\ref{thm:bounded3Tails}.

            \begin{restatable}[Explicit AKKT success condition]{thm}{AKKTsuccess} 
                \label{thm:AKKTsuccess}
                \hfill 

                Let $\HModelExpected$ be a hypergraph distribution such that $\binom{n}{k}\mathcal P(k)=\lambda_k\frac{2^k}{k}n$ for all $k$ and $\min\{k:\lambda_k>0\}\geq5$. If, for some $\varepsilon>0$,
                \[
                    \max_{x\in[0,1]}\left(\frac{16}{3}(1-x)^3
                    \sum_{k\geq5}\lambda_k(k-1)(k-2)(k-3)x^{k-4}\right)\leq1-\varepsilon,
                \]
                then AKKT succeeds a.a.s. on a hypergraph sampled from $\HModelExpected$.
            \end{restatable}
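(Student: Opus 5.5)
The plan is to reduce Theorem~\ref{thm:AKKTsuccess} to Theorem~\ref{thm:bounded3Tails}. For that, it suffices to show that for all sufficiently large $n$ the expected number of new $3$-tails created in any single round is at most $1-\varepsilon/2$.

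\textbf{Additivity over arities.} Since edges of different arities are sampled independently in $\HModelExpected$, the count $X_j$ of $3$-tails created after the $j$-th step is a sum $\sum_k X_j^{(k)}$. Here $X_j^{(k)}$ counts new $3$-tails coming from $k$-edges. Each $k$-subset is examined at most once in the exposure process, so the binomial description in Lemma~\ref{lem:algorithmicPoisson} applies to every arity separately. By linearity of expectation,
\[
\EV{X_j}=\sum_{k}\EV{X_j^{(k)}} .
\]
This sum is finite, because the profile is finitely supported with every arity at least $5$.

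\textbf{Uniformity in $j$.} Lemma~\ref{lem:algorithmicPoisson} is stated along a sequence $j=j(n)$, with an $o(1)$ error. I need the error to be uniform over all $j\in\{0,\dots,n/2-1\}$. I would get this either from a short argument by contradiction or by reading the exact computation directly. For the direct route, set $\alpha=2j/n$ and write $\binom{n-2j-2}{3}\cdot 2\binom{j}{k-4}p_k$ explicitly. Using $\binom{j}{k-4}\le j^{k-4}/(k-4)!$, $\binom{n-2j-2}{3}\le (n-2j)^3/6$ and $\binom{n}{k}\ge (n-k)^k/k!$, this gives
\[
\EV{X_j^{(k)}}\le \frac{16}{3}\lambda_k(k-1)(k-2)(k-3)\,\alpha^{k-4}(1-\alpha)^3\,(1+O(k^2/n)),
\]
with the $O$ constant independent of $j$. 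The argument by contradiction goes as follows. If the bound failed, there would be a sequence $j(n)$ along which $\EV{X_{j(n)}}>1-\varepsilon/2$ infinitely often. Applying the lemma along that sequence (and along a subsequence) contradicts the hypothesis, since at each $n$ the main term is at most $\max_x f(x)$.

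\textbf{Conclusion.} Summing over the finitely many $k$, I get
\[
\EV{X_j}\le (1+O(1/n))\,\frac{16}{3}(1-\alpha)^3\sum_k\lambda_k(k-1)(k-2)(k-3)\alpha^{k-4}\le(1+O(1/n))(1-\varepsilon),
\]
which is below $1-\varepsilon/2$ for large $n$, uniformly in $j$. Theorem~\ref{thm:bounded3Tails} then applies with $\varepsilon/2$, and AKKT succeeds a.a.s.

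The main obstacle is the uniformity of the $o(1)$ term over $j$. Lemma~\ref{lem:algorithmicPoisson} hides this in the asymptotic notation. I would handle it with the explicit one-sided inequality above, which needs only an upper bound and is clean because all the binomial estimates go in the right direction.
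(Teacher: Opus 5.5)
Your proposal is correct and takes the same route as the paper. The paper also obtains Theorem~\ref{thm:AKKTsuccess} directly from Theorem~\ref{thm:bounded3Tails}, by identifying the maximized expression (summed over arities, at $x=2j/n$) with the expected number of new $3$-tails computed in Lemma~\ref{lem:algorithmicPoisson}. Your explicit one-sided bound with the factor $(1-k/n)^{-k}=1+O(k^2/n)$, uniform in $j$, together with the passage to $\varepsilon/2$, makes rigorous a step the paper leaves implicit: its hypothesis controls only the main term, not the $o(1)$ error.
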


            \begin{proof}
                \hfill
            
    Here $x$ is the fraction of vertices already colored, and the expression being maximized is the asymptotic expected number of new $3$-tails per round. Equivalently, defining
            \[
                f(x):=\sum_k\frac{\lambda_k}{k}x^k,
            \]
            the condition is $\max_{x\in[0,1]}\frac{16}{3}(1-x)^3f^{(4)}(x)\leq1-\varepsilon$.

                Setting $x = \frac{2j}{n}$ and explicitly computing $f^{(4)}(x)$ yields the same expected count as that derived in Lemma~\ref{lem:algorithmicPoisson}.
            \end{proof}

    \section{Benefits of non-uniformity}
        The condition of Theorem~\ref{thm:AKKTsuccess} is not a bound on a single number. In this section we show that this is not merely an artifact of our analysis, as no condition expressed in terms of the normalized density alone can capture the behaviour of AKKT.

        \subsection{The uniform baseline}
            With Theorem~\ref{thm:AKKTsuccess} in hand, we may compute an explicit bound for the $k$-uniform $\HUniformExpected$ model. With the algorithmic weighting, write the normalized density as $\lambda$, so that $p \binom{n}{k} = \lambda \cdot \frac{2^k}{k} n$. Optimizing, we obtain that as $k \to \infty$, the largest $\lambda$ allowed by this condition approaches $\frac{e^3}{144} \approx 0.1395$ with $\lambda \leq 0.13$ sufficing for all $k \geq 24$. Additionally, $\lambda \leq 0.074$ suffices for all $k \geq 5$.

            The authors of \cite{AKKT02} also show that for sufficiently large $\lambda$, the algorithm fails a.a.s.\footnote{In fact, their argument shows $\lambda \geq 1$ suffices.} In the uniform model, the normalized density therefore determines the outcome, up to the constant separating these two bounds.

        \subsection{Arbitrarily large normalized density}
            The next theorem shows that sufficiently separated edge sizes allow AKKT to succeed a.a.s. at arbitrarily large normalized density.

            \begin{restatable}[Normalized density can be arbitrarily large]{thm}{UnboundedWeight}
                \label{thm:UnboundedWeight} 
                \hfill \\
                For every $c > 0$, there exist a density profile $\left( \lambda_k \right)_{k \geq 1}$ and a distribution $\HModelExpected$ satisfying $\mathcal{P}(k) \binom{n}{k} =  \lambda_k \cdot \frac{2^k}{k} n$ and
                \[
                    \sum_k \lambda_k > c
                \]
                for which AKKT succeeds a.a.s. on a hypergraph drawn from $\HModelExpected$.
            \end{restatable}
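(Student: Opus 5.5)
We will apply Theorem~\ref{thm:AKKTsuccess} directly: it suffices to construct, for a given $c$, a finitely supported profile with $\sum_k\lambda_k>c$ such that
\[
    g(x):=\frac{16}{3}(1-x)^3\sum_{k}\lambda_k(k-1)(k-2)(k-3)x^{k-4}\leq 1-\varepsilon
\]
for all $x\in[0,1]$. Write $g_k(x):=\frac{16}{3}(1-x)^3(k-1)(k-2)(k-3)x^{k-4}$, so $g=\sum_k\lambda_k g_k$. By the unimodality noted in the proof of Lemma~\ref{lem:algorithmicPoisson}, $g_k$ peaks at $x_k=1-\frac{3}{k-1}$, and a direct computation gives $\max g_k=g_k(x_k)\to\frac{16}{3}\cdot 27e^{-3}=\frac{144}{e^3}$ as $k\to\infty$. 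This is the uniform baseline. We take $\lambda=\frac{e^3}{144}\cdot\frac12$, or any fixed constant with $\lambda\sup_{k\geq5}\max g_k\leq\frac12$, so that each single arity contributes at most $\frac12$ at its own peak. Then we pick arities $k_1<k_2<\dots<k_m$ with $m>c/\lambda$ and set $\lambda_{k_i}=\lambda$.

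The key step is to choose the $k_i$ so separated that at every $x$ at most one $g_{k_i}$ is non-negligible, with total leakage at most $\frac14$. We quantify the two tails. For $x\leq x_k$ well below the window, $g_k(x)\leq\frac{16}{3}k^3x^{k-4}$, which is tiny once $x\leq 1-\frac{C\ln k}{k}$ for a large constant $C$. For $x$ near $1$, $(1-x)^3$ is small, and $g_k(x)\leq\frac{16}{3}k^3(1-x)^3$, which is tiny once $1-x\leq k^{-2}$ (say). Thus $g_k$ is essentially supported on the window $W_k=[1-\frac{C\ln k}{k},\,1-k^{-2}]$ in the $1-x$ scale. We choose the $k_i$ inductively so that $k_{i+1}$ exceeds a large power of $k_i$, and so that the windows are disjoint in a quantitative sense: for $x\in W_{k_i}$, $g_{k_j}(x)\leq\frac{1}{4m}$ for all $j\neq i$, and outside all windows every $g_{k_j}\leq\frac1{4m}$. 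The mapping $x\mapsto 1-x$ turns the windows into $[k^{-2},C\ln k/k]$, and these are disjoint once $k_{i+1}\geq k_i^3$ for large $k_i$. The smallness then follows from explicit estimates such as $x^{k-4}\leq e^{-(k-4)(1-x)}$, and by choosing $k_1$ large enough depending on $m$ the leakage bound $\frac1{4m}$ holds. Summing gives $g(x)\leq\frac12+m\cdot\frac{1}{4m}=\frac34$ uniformly in $x$, so Theorem~\ref{thm:AKKTsuccess} applies with $\varepsilon=\frac14$, while $\sum_k\lambda_k=m\lambda>c$.

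The main obstacle is making the leakage estimates uniform: the bound on the $j$-th term at points in window $i$ must hold with a constant depending only on $m$ (via $k_1$) and not deteriorate as $j$ ranges. I would handle this by fixing $m$ first, then choosing $k_1$ so large that $\frac{16}{3}k^3e^{-(k-4)\cdot C\ln k/k}\leq\frac{1}{4m}$ and $\frac{16}{3}k^3\cdot k'^{-6}\leq\frac1{4m}$ whenever $k'\geq k^3$, both of which hold for $k\geq k_1$ after fixing $C$. Since $m$ is finite, this choice is legitimate, and the remaining verification is routine calculus.
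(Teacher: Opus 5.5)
Your proposal is correct and follows the same route as the paper: you space the arities so widely that their critical windows (in the $1-x$ scale) are disjoint, load each arity to a fixed fraction of its single-arity peak, and bound the leakage of all other arities before invoking Theorem~\ref{thm:AKKTsuccess}. The differences are only technical. The paper uses the window $(1-k^{-1/2},1-k^{-2})$, spacing $k_{i+1}=k_i^4$, and unimodality to get a summable leakage bound $Ck^{-3}$, so its $k_0$ need not grow with the number of arities, whereas you fix $m$ first and demand per-arity leakage $\frac{1}{4m}$. One small correction: on $W_{k'}$ the right bound is $g_k(x)\le\frac{16}{3}k^{-3}$, which follows from $1-x\le k^{-2}$, rather than $\frac{16}{3}k^3(k')^{-6}$.
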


            Mixing edge sizes may therefore bring an unbounded algorithmic advantage. At a normalized density for which AKKT is known to fail on uniform instances, it still succeeds a.a.s. on the instances constructed below.

            Figure~\ref{fig:singleArities} plots the expectation from Lemma~\ref{lem:algorithmicPoisson} for three edge arities. It illustrates how the contribution from a single arity varies during the execution.

            \begin{figure}[H]
                \centering
                \begin{subfigure}[b]{0.3\textwidth}
                    \centering
                    \includegraphics[width=\textwidth]{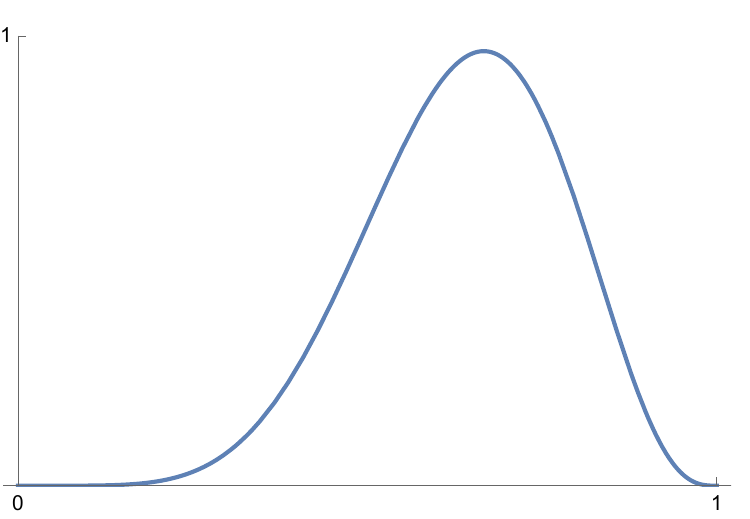}
                    \caption{$k = 10$}
                \end{subfigure}
                \hfill
                \begin{subfigure}[b]{0.3\textwidth}
                    \centering
                    \includegraphics[width=\textwidth]{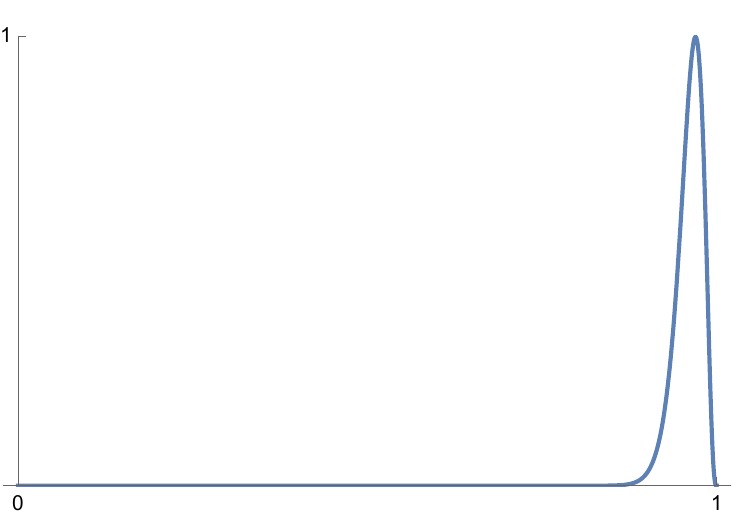}
                    \caption{$k = 100$}
                \end{subfigure}
                \hfill
                \begin{subfigure}[b]{0.3\textwidth}
                    \centering
                    \includegraphics[width=\textwidth]{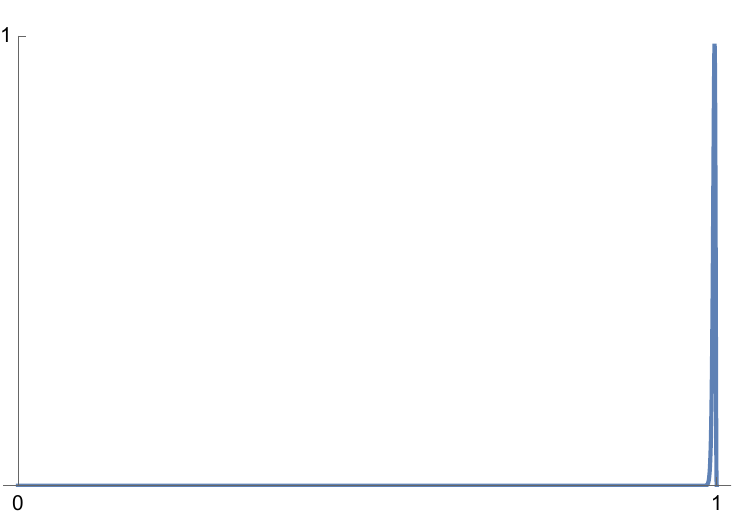}
                    \caption{$k = 1000$}
                \end{subfigure}
                \caption{Expected count of new $3$-tails throughout AKKT's execution.}
                \Description{Three plots show the expected number of new 3-tails for edge arities 10, 100, and 1000. As the arity increases, the peak moves later in the execution and becomes narrower.}
                \label{fig:singleArities}
            \end{figure}

            Observe that as $k$ increases, the peak occurs later and the interval with non-negligible expectation gets shorter. Because of this, $3$-tails originating from edges which are far apart in size are not likely to appear in the same positive interval. We build on this in the proof of Theorem~\ref{thm:UnboundedWeight} by picking sufficiently spaced edge arities. Figure~\ref{fig:multiArities} plots the total expectation of new $3$-tails for several non-uniform distributions. In each of these examples, the contribution coming from each arity peaks just below $1$.
            \begin{figure}[H]
                \centering
                \begin{subfigure}[b]{0.3\textwidth}
                    \centering
                    \includegraphics[width=\textwidth]{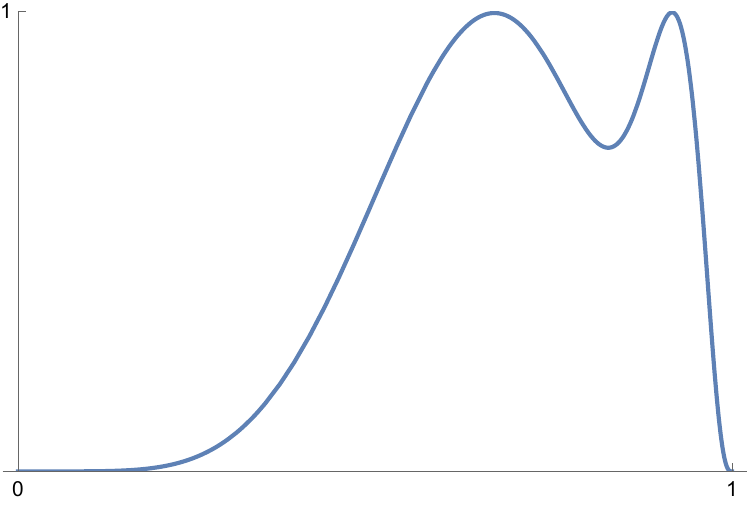}
                    \caption{$k_0 = 10, k_1 = 40$}
                \end{subfigure}
                \hfill
                \begin{subfigure}[b]{0.3\textwidth}
                    \centering
                    \includegraphics[width=\textwidth]{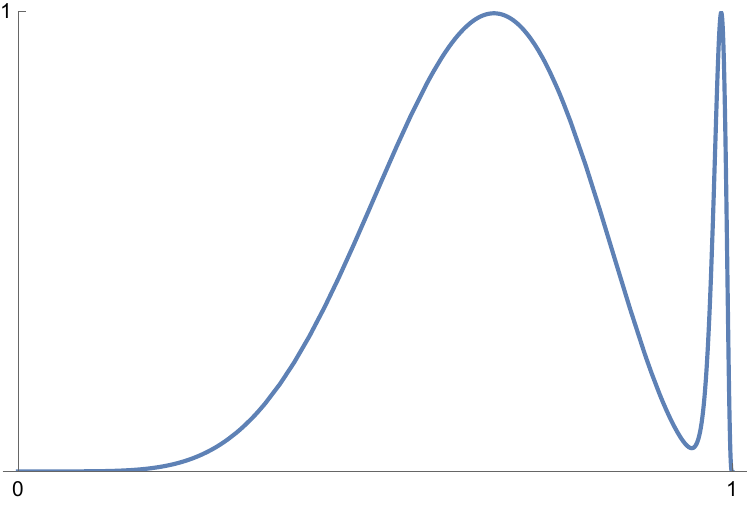}
                    \caption{$k_0 = 10, k_1 = 200$}
                \end{subfigure}
                \hfill
                \begin{subfigure}[b]{0.3\textwidth}
                    \centering
                    \includegraphics[width=\textwidth]{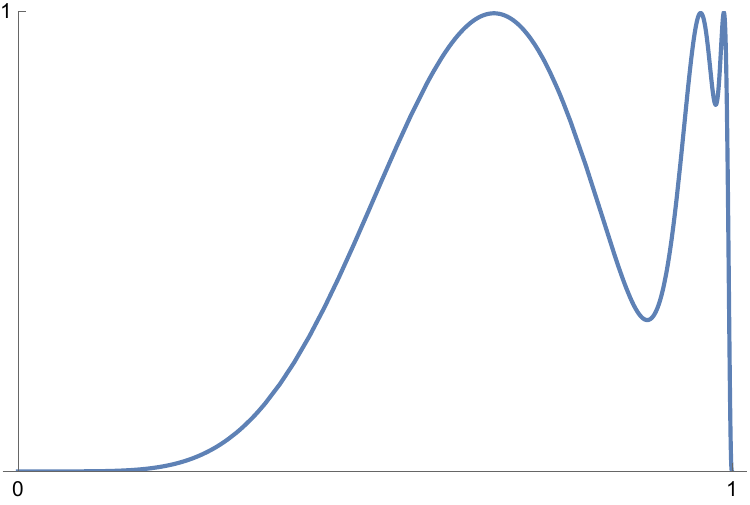}
                    \caption{$k_0 = 10, k_1 = 70, k_2 = 300$}
                \end{subfigure}
                \caption{Expected count of new $3$-tails throughout AKKT's execution for non-uniform distributions.}
                \Description{Three plots show the combined expected number of new 3-tails for profiles supported on arities 10 and 40; 10 and 200; and 10, 70, and 300. They illustrate the effect of separating edge sizes on the overlap between their contributions.}
                \label{fig:multiArities}
            \end{figure}

            We now make this separation precise in the proof of Theorem~\ref{thm:UnboundedWeight}.
            \begin{proof}
                \hfill \\
                Recall that the expected number of new $3$-tails originating from $k$-edges is unimodal and maximized at $\alpha = 1 - \frac{3}{k-1}$, where $\alpha = \frac{2j}{n}$ parametrizes the round number. Let $\left( 1 - \frac{1}{\sqrt{k}}, 1 - \frac{1}{k^2} \right)$ be the critical interval for edges of arity $k$. We show that the contribution from $k$-edges is negligible whenever $\alpha$ lies outside this interval.

                Substituting $\alpha = 1 - \frac{1}{\sqrt{k}}$ into the expression from Lemma~\ref{lem:algorithmicPoisson}, we obtain
                \[
                    \frac{16}{3}\lambda_k (k-1)(k-2)(k-3) \left( \frac{1}{\sqrt{k}} \right)^3 \left( 1 - \frac{1}{\sqrt{k}} \right)^{k-4}
                    < C_1 \cdot k^{3/2} \cdot e^{-\sqrt{k}}
                \]

                Similarly, substituting $\alpha = 1 - \frac{1}{k^2}$ gives
                \[
                    \frac{16}{3}\lambda_k (k-1)(k-2)(k-3) \left( \frac{1}{k^2} \right)^3 \left( 1 - \frac{1}{k^2} \right)^{k-4}
                    < C_2 \cdot k^{-3}
                \]

                Therefore, for sufficiently large $k$, the contribution from $k$-edges outside its critical interval is bounded by $C k^{-3}$, for some $C > 0$. The key to success is to choose a sequence of arities such that their critical intervals are disjoint, for example $k_{i+1} = k_i^4$. Then, every $\alpha$ lies in at most one critical interval, while the combined contribution from all remaining selected arities is bounded by $C \sum_{k \geq k_0} k^{-3}$. Since $\sum_k k^{-3}$ converges, this can be made arbitrarily small by choosing sufficiently large $k_0$.

                Finally, fix a small $\varepsilon > 0$ and assign $\lambda_{k_i} = \lambda > 0$ such that the peak contribution from each selected arity $k_i$ is bounded by $1-2\varepsilon$. By taking $k_0$ sufficiently large, $C \sum_{k \geq k_0} k^{-3} \leq \varepsilon$. Since the total expected count of new $3$-tails is below $1 - \varepsilon$, AKKT succeeds a.a.s. As we can pick arbitrarily many different arities, the proof is complete.
            \end{proof}

    \section{Consequences of non-uniformity and open questions}
        Perhaps the most interesting consequence of non-uniformity is the contrast between the algorithmic and the non-constructive results. In the non-constructive setting, both bounds essentially depend on a purely quantitative condition, namely the expected number of monochromatic edges. By contrast, the condition for AKKT's success is subtler and more qualitative in spirit, as what matters is not how much constraint the instance carries, but how it is distributed over the execution. This raises the question of whether this contrast reflects a fundamental difference between the two settings, or is specific to AKKT.

        There are reasons to believe that the algorithmic lower bound cannot be improved to $\omega \left( \frac{2^k}{k}\ln k \cdot n \right)$. Evidence for such a barrier comes from the geometry of the \emph{solution space}. Achlioptas and Coja-Oghlan \cite{ACO08} showed that for random $k$-graphs with an expected number of edges between $(1 + \varepsilon_k) \frac{2^{k-1}}{k} \ln k \cdot n$ and $(1 - \varepsilon_k) 2^{k-1} \ln 2 \cdot n$, this space is \emph{shattered} a.a.s., where $\varepsilon_k\to0$ as $k\to\infty$. That is, the proper $2$-colorings are scattered into many small regions that are separated by large Hamming distances. Similar phenomena occur in a number of related problems, often coinciding with the point at which the best-known polynomial-time algorithms fail. Moreover, similar behaviour has been proved to be a barrier for certain classes of algorithms, see~\cite{Gamarnik21}.

        For the closely related random $k$-SAT, Coja-Oghlan \cite{CO10} provided an algorithm that succeeds up to the leading-order asymptotic scale of the shattering threshold. If an analogous algorithm were found for $2$-colorability, it would be natural to ask whether it would extend to the non-uniform setting. It would be particularly interesting to determine whether its success condition depends only on the normalized density under a suitable weighting or, as for AKKT, on the full density profile.

        Recall that Friedgut's result \cite{Friedgut1999} implies that, for each $k \geq 3$, there exists a sharp threshold sequence $r_{col}(k, n)$. In particular, for every $\varepsilon > 0$, hypergraphs sampled from $\HUniform$ are $2$-colorable a.a.s. if $M < (1-\varepsilon)r_{col}(k, n) n$ and non-$2$-colorable a.a.s. if $M > (1+\varepsilon)r_{col}(k, n)n$. It is an open problem to determine whether a \emph{universal} $r_{col}(k) := \lim_{n \to \infty} r_{col}(k, n)$ exists. Coja-Oghlan and Zdeborová \cite{CZ2012} conjecture that $r_{col}(k) = \frac{\ln 2}{2} \left(2^k - 1\right)-\frac{1}{4}+\oh(1)$, as $k \to \infty$. In the non-uniform setting, an interesting and perhaps even harder question is whether the threshold can be characterized by a single quantity, perhaps obtained by appropriate weighting.

    \section*{AI statement}
        The motivation for this paper, the proof strategies, as well as the majority of the computations come directly from the authors. The only part whose development was partially AI-assisted is Appendix~\ref{appendix:sharpThreshold}, on which we elaborate below.

        Generative AI was used in the finalization of the paper as an editing aid, to help us proofread, check proofs, and find and correct mistakes. In the process, we used some AI-generated suggestions for minor simplifications of certain computations, without changing their ``original spirit'' or altering the proof strategies.

        Regarding the aforementioned Appendix~\ref{appendix:sharpThreshold}, part of the story is already described in the paper. When we started work on this project, roughly two years ago, the first proof strategy for the sharp-threshold transition that we tried was almost identical to the one presented in the paper. However, as we explain there, the tools we use are very delicate in this context. Unfortunately, we managed to convince ourselves that this approach did not work. For this reason, we took a much more technical and longer approach. However, during the final, AI-assisted review of the paper, AI suggested a ``similar'' approach. The approach suggested by AI was misleading, but it prompted us to once again examine, with the help of AI, our original approach. To our surprise, the original idea worked, and we were able to drastically simplify the proof. Thus, the proof presented in the appendix is difficult to classify simply as AI-generated or not: the initial idea was ours, but it was revived by an indirect AI suggestion. Due to this ambiguity, we opted to describe the process in greater detail.

    \patchcmd{\thebibliography}{\section*{\refname}}{}{}{}
    \bibliographystyle{unsrt}
    \bibliography{pb}

    \appendix
    
    \section{Sharp thresholds}
    \label{appendix:sharpThreshold}
        Recall that $\HModelExpected$ denotes the independent-edge model determined by a function $\Ps:\Nat \rightarrow [0,1]$. For $c>0$, write $(c\Ps_n)(k):=c\Ps_n(k)$.

        A hypergraph property $\mathcal{A} = (\mathcal{A}_n)_{n \geq 1}$ is \emph{monotone} if it is preserved under edge addition, and \emph{symmetric} if it is invariant under vertex permutations.

        We call $(\Ps_n)$ a \emph{sharp threshold profile} for $\mathcal{A}$ if for every $0 < \varepsilon < 1$
        \[
            \Prob{ \widehat H(n;(1-\varepsilon)\Ps_n) \in \mathcal{A}_n } \to 0
        \]
        and
        \[
            \Prob{ \widehat H(n;(1+\varepsilon)\Ps_n) \in \mathcal{A}_n } \to 1
        \]

        \subsection{Embedding $\HModelExpected$ into Hatami-Molloy's model}
        Hatami and Molloy~\cite{HM08} extended Friedgut's sharp threshold tools~\cite{Friedgut1999, Friedgut2005} to the following random CSP model.
        \begin{definition*}{$\RandomCSP$ model} \hfill \\
            Let $\mathcal{Q}$ be a distribution on $K$-ary constraints. For each ordered $K$-tuple of distinct variables chosen from $n$ variables and each constraint $C \in \operatorname{supp}(\mathcal{Q})$, impose $C$ on the tuple independently with probability $\mathcal{Q}(C)\frac{p}{K!}$.
        \end{definition*}
        
        Fix a nonzero density profile $(\lambda_k)_{k \geq 3}$ and set $\Ps_n(k) = \lambda_k 2^k n / \binom{n}{k}$. We first show how to embed the corresponding $\HModelExpected$ model into $\RandomCSP$.
        
        Let $K := \max\{ k: \lambda_k > 0 \}$.For each $k$ with $\lambda_k>0$, let $C_k$ be the boolean $K$-ary constraint requiring that the first $k$ variables are not all equal, with no restriction on the remaining variables. This is precisely the restriction imposed by a $k$-edge. Define the distribution $\mathcal{Q}$ by
        \[
            \mathcal{Q}(C_k) = \frac{\lambda_k 2^k}{\sum_j \lambda_j 2^j} 
        \]
        \begin{restatable}[Embedding $\HModelExpected$ into $\RandomCSP$]{prop}{modelEmbedding} 
            \label{prop:modelEmbedding}
            \hfill

            For $\mathcal Q$ defined above and
            \[
                p := \frac{n \sum_j\lambda_j 2^j}{\binom{n}{K}},
            \]
            the hypergraph obtained from $\RandomCSP$ has independent edges, and each potential $k$-edge of a permitted size appears with probability $(1+\oh(1))\Ps_n(k)$ as $n \to \infty$.
        \end{restatable}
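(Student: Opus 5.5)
The plan is to compute directly, for each potential edge, which Bernoulli trials of $\RandomCSP$ produce it. First we fix the meaning of ``the hypergraph obtained from $\RandomCSP$''. Whenever constraint $C_k$ is imposed on an ordered $K$-tuple $(v_1,\dots,v_K)$, we record the $k$-edge $\{v_1,\dots,v_k\}$. Repeated records of the same set collapse to a single edge. This is legitimate because $C_k$ on that tuple imposes exactly the not-all-equal restriction of this $k$-edge, and imposing the same restriction twice changes nothing. Hence the CSP instance and the resulting hypergraph have the same set of satisfying assignments, i.e.\ proper $2$-colorings.

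Next we verify independence. Fix a $k$-set $e$ with $\lambda_k>0$. The trials that can produce $e$ are the pairs $(\text{tuple},C_k)$ where the tuple lists the elements of $e$ in some order in its first $k$ positions, followed by $K-k$ distinct vertices outside $e$. Each trial (a pair $(\text{tuple},C)$) produces exactly one set, namely the first $|C|$-many entries for $C=C_{|C|}$. Therefore the families of trials associated with distinct potential edges are pairwise disjoint. This holds also for edges of different sizes, since the constraint label determines $k$. Since all trials in $\RandomCSP$ are independent, the events $\{e\in\mathcal H\}$ are mutually independent.

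Then we compute the edge probabilities. The number of trials producing $e$ is $N_k = k!\,(n-k)(n-k-1)\cdots(n-K+1)$. Each trial succeeds with probability
\[
q_k=\mathcal Q(C_k)\frac{p}{K!}=\frac{\lambda_k2^k\,n}{K!\binom nK}.
\]
Writing $(n)_m=n(n-1)\cdots(n-m+1)$, we have $K!\binom nK=(n)_K$ and $(n)_K=(n)_k\,(n-k)\cdots(n-K+1)$. Therefore
\[
N_kq_k=\frac{\lambda_k2^kn\,k!\,(n-k)\cdots(n-K+1)}{(n)_K}=\frac{\lambda_k2^kn}{\binom nk}=\Ps_n(k).
\]
The probability that $e$ appears is $1-(1-q_k)^{N_k}$. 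Since $N_kq_k=\Theta(n^{1-k})\to0$ for $k\ge3$, the bounds $N_kq_k-\binom{N_k}{2}q_k^2\le 1-(1-q_k)^{N_k}\le N_kq_k$ give $(1+\oh(1))\Ps_n(k)$. The error is a factor $1-O(N_kq_k)$, uniform over the finitely many permitted $k$.

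I expect no genuine obstacle. The only delicate point is the bookkeeping in the second step: we must check that each (tuple, constraint) trial maps to a single edge, so that disjointness, and hence independence, holds. The rest is the routine falling-factorial identity above.
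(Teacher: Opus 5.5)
Your proposal is correct and follows the paper's proof: the same count $k!(n-k)!/(n-K)!$ of trials producing $e$, the same per-trial probability, and two-sided elementary bounds on $1-(1-q_k)^{N_k}$ giving the error $\Ps_n(k)^2/2$. You make explicit the disjointness of trial families that justifies independence, which the paper leaves implicit. The paper's proof also couples the resulting hypergraph with $\widehat H(n;\Ps_n)$ to within $\Oh(n^{-1})$ total variation; this is used afterwards and is not needed for the statement itself.
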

        \begin{proof}
            \hfill 

            Fix a potential $k$-edge $e$ with $\lambda_k > 0$. This edge appears if at least one occurrence of $C_k$ has exactly the vertices of $e$ in its first $k$ positions. There are $k!$ ways to order these vertices and $(n-k)!/(n-K)!$ ways to fill the remaining positions with distinct vertices outside $e$. Thus $\frac{k!(n-k)!}{(n-K)!}$ possible constraint occurrences produce $e$. Each is included independently with probability $\mathcal{Q}(C_k)\frac{p}{K!} = \frac{\lambda_k 2^k n}{K!\binom{n}{K}}$. Therefore
            \[
                \Prob{e \text{ appears}}
                = 1-\left(1-\frac{\lambda_k2^k n}{K!\binom nK}\right)^{\frac{k!(n-k)!}{(n-K)!}}
                \geq 1 - \exp(-\Ps_n(k))
                \geq \Ps_n(k)-\frac{\Ps_n(k)^2}{2}
            \]
            We used $1-x\leq\exp(-x)$ and $\exp(-x)\leq1-x+x^2/2$. The union bound gives $\Prob{e \text{ appears}} \leq \Ps_n(k)$, so
            \[
                0 \leq \Ps_n(k) - \Prob{e \text{ appears}}
                \leq \frac{\Ps_n(k)^2}{2}.
            \]
            Since $\Ps_n(k) \to 0$, this proves the claimed asymptotic.

            To compare the two random hypergraphs, assign each potential edge an independent uniform random number from $[0,1]$. Use the same number in both models, including the edge whenever it is below the respective inclusion probability. A $k$-edge then appears in exactly one model with probability equal to the difference between the two inclusion probabilities, which is at most $\Ps_n(k)^2/2$.

            By a union bound over all potential edges, the probability that the two hypergraphs differ is at most
            \[
                \frac{1}{2} \sum_k \binom{n}{k} \Ps_n(k)^2
                = \frac{n^2}{2} \sum_k \frac{(\lambda_k 2^k)^2}{\binom{n}{k}}
                = \Oh(n^{-1})
                = \oh(1)
            \]
            Consequently, any hypergraph property, including $2$-colorability, holds in the two models with probabilities differing by $\oh(1)$.
        \end{proof}

        The embedding gives the following consequence of Theorem~6 of~\cite{HM08}. If non-$2$-colorability\footnote{We use non-$2$-colorability because it is preserved under edge addition.} in $\HModelExpected$ has a coarse threshold, then there exist a sequence $(\Ps_n)$, constants $\tau,\alpha,\varepsilon > 0$, and a fixed hypergraph $M$ with permitted edge sizes such that, for infinitely many $n$:
        \begin{enumerate}
            \item $\alpha < \Prob{ \HModelExpected \text{ is non-2-colorable} } < 1 - 3\alpha$ 
            \item $\Prob{ \widehat{H}(n; (1 + \varepsilon)\mathcal{P}) \text{ is non-2-colorable} } < 1 - 2\alpha$
            \item $\Prob{\HModelExpected \oplus M \text{ is non-2-colorable}} > 1 - \alpha$
            \item $\tau < \Prob{M \subseteq \HModelExpected}$
        \end{enumerate}
        Here $\oplus$ denotes adding a uniformly random copy of $M$ to the sampled hypergraph.

        We now follow Friedgut's proof~\cite{Friedgut2005} to rule out a coarse threshold. The argument extends to the non-uniform model without substantive changes, so we omit the details. 
        
        Recall that, conditional on the number of edges of each size, $\HModelExpected$ is exactly the corresponding fixed edge model. Moreover, these counts are concentrated around their mean, which is $\Theta(n)$ with $\oh(n)$ deviations a.a.s. By monotonicity, the colorability probability in $\HModel$ lies, up to $\oh(1)$, between those in independent-edge models with expected counts a fixed proportion above and below the prescribed counts. Since this proportion can be arbitrarily small, sharpness transfers to $\HModel$.

        Since the threshold is sharp, a suitable choice of $\varepsilon$ in Lemma~\ref{lem:nonUniformSecondMoment} gives $2$-colorability a.a.s. and proves Theorem~\ref{thm:nonUniformLowerBound}.

        \subsection{Note on sharp thresholds in non-uniform random structures}
        The embedding above enabled us to apply well-understood tools for uniform structures. Its construction is delicate and relies on all edge counts having the same asymptotic order in $n$. Otherwise, a distribution $\mathcal{Q}$ independent of $n$ could not be constructed.

        Finding the correct embedding delayed the completion of this paper. A similar construction was among our first proof strategies and appeared in an early draft, but it was incorrect. The difficulties with that construction led us to believe that this approach could not work. With that route apparently closed, we examined the proofs of Friedgut's tools~\cite{Friedgut1999,Friedgut2005} and observed that the same strategy extends to non-uniform hypergraphs. We worked out this extension intending to use it here. Fortunately, the embedding above ultimately made this unnecessary.

        Nevertheless, we present the results obtained by this approach, which apply to much broader families of non-uniform distributions, without requiring all edge counts to have the same asymptotic order. The proofs follow Friedgut's strategy and are available in a technical note on the author's website \cite{Ryn2026note}.
        
        Below we consider sequences $(\Ps_n)_{n \geq 1}$ with fixed finite support whose probabilities are \emph{proportional to powers of $n$}. That is, for each permitted size $k$,
        \[
            \Ps_n(k) = a_k n^{-\beta_k}
        \]
        for all sufficiently large $n$, where $a_k,\beta_k>0$ are constants independent of $n$.

        For a fixed hypergraph $F$ with permitted edge sizes, let $E_{\Ps_n}(F)$ denote the expected number of copies of $F$ in $\widehat H(n;\Ps_n)$, where copies need not be induced. We call $F$ \emph{balanced} with respect to $(\Ps_n)$ if, for every subhypergraph $F' \subseteq F$ containing at least one edge,
        \[
            E_{\Ps_n}(F) = \Oh\left(E_{\Ps_n}(F')\right) \qquad \text{as } n \to \infty.
        \]

        \begin{restatable}[Non-uniform coarse threshold theorem]{thm}{applicabletwo}
            \label{thm:applicable2}
            \hfill

            Let $\mathcal A=(\mathcal A_n)_{n\geq1}$ be a monotone symmetric hypergraph property, and let $(\Ps_n)$ be a sequence as above. Suppose that there exist constants $0<\eta<1$, $\rho>0$, and an infinite set $\mathcal{N} \subseteq \Nat$ such that, for every $n \in \mathcal{N}$,
            \[
                \rho
                < \Prob{\widehat H(n;(1-\eta)\Ps_n)\in\mathcal A_n}
                \leq \Prob{\widehat H(n;(1+\eta)\Ps_n)\in\mathcal A_n}
                < 1-\rho.
            \]

            Then there exist constants $\varepsilon,\alpha,\tau>0$, an infinite set $\mathcal N'\subseteq\mathcal{N}$, and a fixed hypergraph $M$ with permitted edge sizes balanced with respect to $(\Ps_n)$, such that
            \[
                E_{\Ps_n}(M)=\Theta(1).
            \]
            For every $n\in\mathcal N'$,
            \[
                \alpha
                < \Prob{\widehat H(n;\Ps_n)\in\mathcal A_n}
                < 1-3\alpha
            \]
            and
            \[
                \Prob{\widehat H(n;\Ps_n)
                    \text{ contains a copy of }M}
                \geq\tau.
            \]
            Moreover, for each such $n$, there exists a deterministic hypergraph $G$ on $[n]$ such that
            \[
                \Prob{G \oplus M \in \mathcal A_n} > 1-\alpha
            \]
            and
            \[
                \Prob{G \cup \widehat H(n;\varepsilon\Ps_n) \in\mathcal A_n} < 1-2\alpha.
            \]
        \end{restatable}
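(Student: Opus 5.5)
We follow Friedgut's strategy for coarse thresholds~\cite{Friedgut1999,Friedgut2005}, replacing the single edge probability by a one-parameter family. For $t\in[1-\eta,1+\eta]$ let $\mu(t):=\Prob{\widehat H(n;t\Ps_n)\in\mathcal A_n}$. Since $\mathcal A$ is monotone, $\mu$ is nondecreasing. By the hypothesis it rises from above $\rho$ to below $1-\rho$. The first step is to find $t_n$ with $\alpha<\mu(t_n)<1-3\alpha$ and small derivative $t_n\mu'(t_n)=\Oh(1)$; this follows from the mean value theorem on the interval of length $2\eta$. After rescaling $\Ps_n$ by $t_n$ we may assume $t_n=1$, which preserves the form $a_kn^{-\beta_k}$ up to constants.

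Next we apply Russo--Margulis. Here $\mu'(1)=\sum_k \Ps_n(k)\,I_k$, where $I_k$ is the total influence of the $k$-edges. Bounded total weighted influence is the input to the Bourgain/Friedgut junta theorem for product measures with different marginals. We then restrict to one arity at a time, conditioning on the other arities to obtain a biased product space with uniform $p=\Ps_n(k)$. Symmetry implies that $\mathcal A$ is approximated by a property that depends on the appearance of small subhypergraphs. As in Friedgut, we extract a fixed hypergraph $M$ and a positive-probability set of deterministic $G$. The extraction proceeds by a random-restriction argument: add a random copy of $M$ to $G$ versus a random sprinkle $\widehat H(n;\varepsilon\Ps_n)$. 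Because the bias is polynomial in $n$ and the support is finite, the count of copies of any fixed $F$ is $E_{\Ps_n}(F)=\Theta(n^{\gamma_F})$ for an explicit exponent $\gamma_F$.

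Balancedness and $E_{\Ps_n}(M)=\Theta(1)$ come from a standard minimality argument. Among all subgraphs whose planting boosts the probability, take $M$ that minimizes $E_{\Ps_n}(M)$, breaking ties by edge count.

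\begin{itemize}
\item If $E_{\Ps_n}(M)\to\infty$, a second-moment argument shows that $\widehat H$ already contains many copies, which contradicts $\mu'=\Oh(1)$.
\item If $E_{\Ps_n}(M)\to0$, then $M$ would appear with vanishing probability, which contradicts $\tau$. This gives $\Theta(1)$ along a subsequence $\mathcal N'$.
\item If some $F'\subseteq M$ had much smaller expectation, then planting $F'$ alone would already suffice, which contradicts minimality.
\end{itemize}

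Because exponents $\gamma_F$ are rational with bounded denominators, "$\Theta$" versus "$o$" dichotomies are clean. The inequality $\Prob{G\cup\widehat H(n;\varepsilon\Ps_n)\in\mathcal A_n}<1-2\alpha$ then follows by averaging over $G$ drawn from $\widehat H(n;(1-\varepsilon)\Ps_n)$ and using that $\mu$ is nearly flat near $1$.

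The main obstacle is the junta/approximation step. The measure is a product over several arities with biases of different polynomial orders, so Friedgut's theorem for a single $p$ does not apply verbatim. We would first try handling each arity separately: fix the others and treat $\mathcal A$ as a property of $k$-edges alone, with uniform bias $\Ps_n(k)$. We would then recombine the $M_k$ into one $M$, using that the total influence is a weighted sum of per-arity influences with each term $\Oh(1)$.
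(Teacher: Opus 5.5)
The paper gives no proof of this theorem; it only says that Friedgut's strategy carries over to the non-uniform product measure and defers the details to a technical note. Your outline follows that strategy, and the mean-value and Russo--Margulis steps are fine. The gap is the one step that is new in this setting, which you leave open, and the route you propose for it does not work.

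\textbf{Why per-arity conditioning fails.} You plan to apply Bourgain's single-bias theorem to the $k$-edges with the other arities $\omega_{-k}$ frozen, and then to ``recombine the $M_k$ into one $M$''. The conditioned function $f_{\omega_{-k}}$ is no longer invariant under vertex permutations, so every place where Friedgut uses symmetry breaks down.
\begin{itemize}
\item A boosting or witnessing set for $f_{\omega_{-k}}$ depends on $\omega_{-k}$ and typically sits in a specific position relative to it. You therefore cannot pass to ``a uniformly random copy of $M_k$ boosts''.
\item For the same reason you cannot use the identity $\Prob{\mathcal A_n \mid \text{a fixed copy of } M \text{ is present}}=\EV{X\mathbf{1}_{\mathcal A_n}}/\EV{X}$, which drives the proof that $E_{\Ps_n}(M)=\Theta(1)$.
\item The decomposition $t\mu'(t)=\sum_k t\Ps_n(k)\,I_k$ only bounds each summand. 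It says nothing about how structures of different arities must be glued together.
\end{itemize}

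\textbf{A concrete example.} Take $\Ps_n(3)=n^{-2}$ and $\Ps_n(4)=n^{-3}$. Let $\mathcal A$ be the property that some $3$-edge and some $4$-edge share at least two vertices. When both probabilities are multiplied by $s$, the number of such pairs is asymptotically Poisson with mean $s^2/4$. So the threshold is coarse, and the conclusion is witnessed by the mixed two-edge configuration. Conditioned on the $3$-edges, however, the witnesses for the $4$-edge function are single $4$-edges. A uniformly random $4$-edge added to a typical $G$ creates the configuration only with probability $\Oh(n^{-1})$, and $E_{\Ps_n}(M_4)=\Theta(n)$. The same happens with the arities exchanged. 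The isomorphism types $M_3$ and $M_4$ do not record how to glue them, which is exactly what the theorem asks for.

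\textbf{What is needed.} You need a Bourgain-type dichotomy on the joint product measure over all potential edges, with bias $\Ps_n(k)$ on each potential $k$-edge. This measure is invariant under vertex permutations, and its witnesses are small mixed hypergraphs. There are two ways to get it.
\begin{itemize}
\item Rerun Bourgain's argument with coordinate-dependent biases. This is what the paper's claim that Friedgut's strategy extends amounts to.
\item Reduce to a common bias without destroying symmetry. Replace each potential $k$-edge by about $\Ps_n(k)/q$ independent $q$-biased coordinates, with $q=\oh(\min_k\Ps_n(k))$, and let the edge be present iff one of them is. Absorb the resulting $(1+\oh(1))$ distortion of the edge probabilities into the $(1\pm\eta)$ window by monotonicity. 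Vertex permutations still act, and a witness consisting of boundedly many coordinates is a small mixed hypergraph.
\end{itemize}

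\textbf{Smaller points that also need repair.}
\begin{itemize}
\item The $\beta_k$ need not be rational. The clean $\Theta$-versus-$\oh$ dichotomy comes from the finiteness of the family of candidate $M$, not from bounded denominators.
\item Balancedness follows directly from the $\tau$ bound: for every $F'\subseteq M$, $E_{\Ps_n}(F')\geq\Prob{\widehat H(n;\Ps_n)\text{ contains a copy of }F'}\geq\tau$. Your ``planting $F'$ alone would suffice'' argument is not valid.
\item In the case $E_{\Ps_n}(M)\to\infty$, the contradiction is with the boost, not with $\mu'=\Oh(1)$. The second-moment step also needs every subhypergraph of $M$ to have diverging expectation.
\item Two averaged inequalities with thresholds $1-\alpha$ and $1-2\alpha$ do not by themselves produce a single deterministic $G$ satisfying both. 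The planted probability must first be pushed closer to $1$ relative to the sprinkling gap.
\end{itemize}
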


    \section{Proofs of non-constructive results}

    \subsection{\nameref{thm:nonUniformUpperBound}}
    \label{appendix:proofUpperBound}

    To avoid cumbersome floor and ceiling notation, we do all the calculations assuming that $4|n$. Introducing floors and ceilings changes only lower-order terms, so all asymptotic estimates remain unchanged.

    To further simplify the proofs, we treat $\HModel$ as if the edges were sampled with replacement. As the edge count is $\Oh(n)$ and every arity is at least $3$, the probability that any edge is selected more than once is $\oh(1)$, and hence it does not affect the asymptotic statements.

    \nonUniformUpperBound*
    \begin{proof} \hfill \\ 
        Let $\mathcal{H} \sim \HModel$ be a random hypergraph. Fix a $2$-coloring $\phi$ with $a$ red and $b := n - a$ blue vertices, and let $p_k$ denote the probability that a random $k$-edge is monochromatic under $\phi$. Then
        \[
            p_k = \frac{ \binom{a}{k} + \binom{b}{k} }{ \binom{n}{k} }
            \geq \frac{ 2 \binom{n/2}{k} }{ \binom{n}{k} }
            = 2 \frac{(n/2)^{\underline{k}}}{n^{\underline{k}}}= 2^{1-k}(1+\Oh(n^{-1}))  
        \] 
        where the last equality comes from the fact that the maximal arity $k$ with $\lambda_k>0$ is bounded by some constant. We can put an upper bound on the probability that $\phi$ is proper for $\mathcal{H}$.
        \[
            \Prob{ \phi \text{ is proper for } \mathcal{H} }
            \leq \prod_k (1 - p_k)^{\mathcal{M}(k)}
            \leq \prod_k \left( 1 - 2^{1-k} + \Oh(n^{-1}) \right)^{\mathcal{M}(k)}
            = \prod_k \left( (1 - 2^{1-k}) \left(1 + \Oh(n^{-1})\right) \right)^{\mathcal{M}(k)}
        \]
        \[
            = \exp\left( \sum_k \mathcal{M}(k) \ln \left( 1 - 2^{1-k} \right) \right) \cdot \exp\left( \sum_k \mathcal{M}(k) \ln \left( 1 + \Oh(n^{-1}) \right) \right)
        \]
        \[
            \leq \exp\left( \sum_k -\mathcal{M}(k) \left( 2^{1-k} + 2^{1-2k} \right) \right) \cdot \exp\left( \sum_k \mathcal{M}(k) \cdot \Oh(n^{-1}) \right)
        \]
        \[
            \leq \exp\left( \sum_k -2 \lambda_k \cdot n \cdot \left( 1 + 2^{-k} \right) \right) \cdot \Oh(1)
            = \exp\left( -2n \sum_k \left( 1 + 2^{-k} \right) \lambda_k \right) \cdot \Oh(1)
        \]
        Now, we can put an upper bound on the expected number of proper $2$-colorings of $\mathcal{H}$ 
        \[
            2^n \cdot \exp\left( -2n \sum_k \left( 1 + 2^{-k} \right) \lambda_k \right) \cdot \Oh(1) 
            = \exp\left( n \cdot \left( \ln 2 - 2 \sum_k \left( 1 + 2^{-k} \right) \lambda_k \right) \right) \cdot \Oh(1)
        \]
            Therefore, if $\sum_k \left( 1 + 2^{-k} \right) \lambda_k > \frac{\ln 2}{2}$ the expected number of proper $2$-colorings is $\oh(1)$. Hence, by Markov's inequality, a hypergraph sampled from $\HModel$ is non-$2$-colorable a.a.s.
        \end{proof}

    \subsection{\nameref{thm:nonUniformLowerBound}}
    \label{appendix:proofLowerBound}
    \nonUniformSecondMoment*
    \begin{proof} 
        \hfill \\
        Let $X$ be a random variable  counting the number of proper, equitable colorings of a hypergraph $\mathcal{H} \sim \HModel$. To show that the probability that $\mathcal{H}$ is $2$-colorable is positive, we will employ the second moment method. Recall that for nonnegative random variable with finite variance, Paley-Zygmund inequality implies 
        \[
            \Prob{ X > 0 } \geq \frac{\EV{ X }^2}{\EV{ X^2 }}.
        \]
        Therefore, it is sufficient to show $\frac{\EV{ X^2 }}{\EV{ X }^2} = \Oh(1)$. First, we calculate both $\EV{X}$ and $\EV{X^2}$.

        Let $M_k := \lambda_k \cdot 2^k \cdot n$. To simplify the subsequent computation, from now on we pessimistically assume that $\mathcal{H}$ has exactly $M_k$ $k$-edges.
        \begin{equation}
        \label{eq:EX}
            \EV{X} = \binom{n}{n/2} \prod_k \left( 1 - \frac{2 \binom{n/2}{k}}{ \binom{n}{k} } \right)^{M_k}
        \end{equation}
        \[
            \EV{X^2} = \sum_{ \substack{c_1, c_2\\ \text{equitable 2-colorings of }\mathcal{H}} } \Prob{ \text{Both } c_1 \text{ and } c_2 \text{ are proper} }
        \]
        \begin{equation}
        \label{eq:EX2}
            = \sum_{s=0}^{n/2} \binom{n}{n/2} \binom{n/2}{s}^2 \prod_k \left( 1 - \frac{4\binom{n/2}{k} -2\binom{s}{k} - 2\binom{n/2 - s}{k} }{ \binom{n}{k} } \right)^{M_k},
        \end{equation}
        where $s$ is the number of vertices assigned \emph{RED} in both colorings. By equitability, $s$ is also the number of vertices colored \emph{BLUE} in both colorings. The first coloring can be chosen in $\binom{n}{n/2}$ ways, then the choice of $s$ \emph{RED} and~$s$ \emph{BLUE} vertices determines the second coloring. Given a coloring, there are $2\binom{n/2}{k}$ ways for a $k$-edge to be monochromatic. The rest follows from subtracting double-counted configurations.

        Combining (\ref{eq:EX}) and (\ref{eq:EX2}), we get
        \[
            \frac{ \EV{X^2} }{ \EV{X}^2 }
            = \sum_{s=0}^{n/2} \frac{1}{ \binom{n}{n/2}^2 }  \binom{n}{n/2} \binom{n/2}{s}^2 \prod_k \left( \frac{ 1 - \frac{4\binom{n/2}{k} -2\binom{s}{k} - 2\binom{n/2 - s}{k} }{ \binom{n}{k} } }{ \left( 1 - \frac{2 \binom{n/2}{k}}{ \binom{n}{k} } \right)^2 } \right)^{M_k}
        \]

        After minor simplifications, we can split this expression into two parts. Let
        \[
            A(s) := \frac{1}{ \binom{n}{n/2} } \binom{n/2}{s}^2 
            \text{ \quad and \quad }
            B(s) := \prod_k \left( \frac{ 1 - \frac{4\binom{n/2}{k} -2\binom{s}{k} - 2\binom{n/2 - s}{k} }{ \binom{n}{k} } }{ \left( 1 - \frac{2 \binom{n/2}{k}}{ \binom{n}{k} } \right)^2 } \right)^{M_k}
        \]
        Observe that both $A$ and $B$ are symmetric with respect to $n/4$. Furthermore, $A$ is increasing and $B$ is decreasing for $s \in \{ 0, 1, \ldots, n/4 \}$. Therefore, it suffices to show
        \begin{equation}
        \label{eq:sumAB}
            \sum_{s=0}^{n/4} A(s)B(s) = \Oh(1)
        \end{equation}

        We will bound the sum above in two separate intervals. First, recall the following approximation of $\binom{n}{pn}$
        \begin{equation}
        \label{eq:approxBinom}
            \binom{n}{pn} = \frac{1}{ \sqrt{2\pi n \cdot p(1-p)} } \cdot 2^{n \cdot H(p)} \cdot \left( 1 + \oh(1) \right),
        \end{equation}
        where $H(p) := -p\log(p) - (1-p)\log(1-p)$ is the binary entropy function. As the approximation follows directly from Stirling's formula, it is valid as long as $pn = \omega(1)$ and $(1-p)n = \omega(1)$. 

        \subsubsection{Initial interval}
            \hfill \\
            First, we analyze the sum (\ref{eq:sumAB}) for $s \in \{0, 1, \ldots, \delta n \}$ for small $\delta > 0$ to be determined later. Note that when $s = 0$ the corresponding summand in (\ref{eq:sumAB}) is exactly $\EV{X}^{-1}$. Since $B(s)$ is decreasing we can bound the sum in the \emph{initial interval} as follows
            \begin{equation}
                \label{eq:initialSum}
                \sum_{s=0}^{ \delta n } A(s) B(s) 
                \leq \sum_{s=0}^{ \delta n } A(s) B(0)
                = \EV{X}^{-1} \sum_{s=0}^{ \delta n } \binom{n/2}{s}^2
                < n \cdot \EV{X}^{-1} \cdot \binom{n/2}{ \delta n }^2
                \leq n \cdot \EV{X}^{-1} \cdot 2^{n \cdot H(2 \delta)}
            \end{equation}

            Since $\lim_{\delta \rightarrow 0^+} H(2\delta) = 0$, to show that the contribution from the initial interval is $\oh(1)$, it is sufficient to show that $\EV{X}$ is exponentially large in $n$.

            \begin{restatable}[{$\EV{X}$} is exponentially large]{lem}{exponentiallyManyColorings} 
                \label{lem:exponentiallyManyColorings}
                \hfill \\
                In our regime $\EV{X}$ is exponentially large in $n$.
            \end{restatable}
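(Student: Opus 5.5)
Starting from the exact formula (\ref{eq:EX}), the plan is to take logarithms and show that $\frac{1}{n}\ln\EV{X}$ has a strictly positive limit. Since the support of the profile is finite and the largest arity is a constant, the same computation as in the proof of Theorem~\ref{thm:nonUniformUpperBound} gives
\[
    \frac{2\binom{n/2}{k}}{\binom{n}{k}} = 2^{1-k}\left(1+\Oh(n^{-1})\right).
\]
Each factor in the product is therefore $\left(1-2^{1-k}\right)^{M_k}$ up to a multiplicative $\exp(\Oh(1))$ correction, because $M_k=\Theta(n)$. By (\ref{eq:approxBinom}) we have $\binom{n}{n/2}=\Theta(n^{-1/2})2^n$. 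Together these give
\[
    \ln\EV{X} = n\ln 2 + \sum_k \lambda_k 2^k n \ln\left(1-2^{1-k}\right) + \Oh(\ln n).
\]

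The second step is to bound the logarithm from below. We use $\ln(1-x)\geq -\frac{x}{1-x}$ for $x\in[0,1)$. With $x=2^{1-k}$ this gives
\[
    2^k\ln\left(1-2^{1-k}\right)\geq -\frac{2}{1-2^{1-k}}.
\]
Hence the exponent is at least
\[
    n\left(\ln 2-2\sum_k\frac{\lambda_k}{1-2^{1-k}}\right).
\]

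The main obstacle is that the hypothesis of Lemma~\ref{lem:nonUniformSecondMoment} controls $\sum_k\lambda_k/(1-2^{-k})$ rather than $\sum_k\lambda_k/(1-2^{1-k})$. The two denominators differ by a factor $1+\Oh(2^{-k})$. Since every permitted $k$ is at least $k_0$, we get
\[
    \sum_k\frac{\lambda_k}{1-2^{1-k}}\leq\left(1+\Oh(2^{-k_0})\right)\sum_k\frac{\lambda_k}{1-2^{-k}}<\left(1+\Oh(2^{-k_0})\right)(1-\varepsilon)\frac{\ln 2}{2}.
\]
Choosing $k_0$ large enough in terms of $\varepsilon$ makes the right-hand side at most $(1-\varepsilon/2)\frac{\ln 2}{2}$. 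The exponent is then at least $\frac{\varepsilon\ln 2}{2}n-\Oh(\ln n)$, which proves that $\EV{X}$ is exponentially large.

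For later use in (\ref{eq:initialSum}), we also record the explicit growth rate $\EV{X}\geq 2^{\varepsilon n/2-o(n)}$. This lets us pick $\delta$ with $H(2\delta)<\varepsilon/2$, so that the initial interval contributes $\oh(1)$.
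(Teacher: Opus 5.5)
Your proposal is correct and follows essentially the same route as the paper: take logarithms, use Stirling for $\binom{n}{n/2}$ together with the estimate $2\binom{n/2}{k}/\binom{n}{k}=2^{1-k}(1+\Oh(n^{-1}))$, bound $2^k\ln(1-2^{1-k})$ from below, and absorb the mismatch with the hypothesis's $1-2^{-k}$ denominators by taking $k_0=k_0(\varepsilon)$ large. The only difference is technical: the paper uses the expansion $2^k\ln(1-2^{1-k})=-\frac{2}{1-2^{-k}}(1+\Oh(2^{-2k}))$ and gets rate $\frac{\varepsilon^2\ln 2}{2}$, whereas you use the exact inequality $\ln(1-x)\geq-\frac{x}{1-x}$, absorb a factor $1+\Oh(2^{-k_0})$ instead, and get rate $\frac{\varepsilon\ln 2}{2}$.
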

            \begin{proof} 
                \hfill \\
                This proof is fairly similar to the proof of Theorem \ref{thm:nonUniformUpperBound}. We have
                \begin{align*}
                    \EV{X} & = \binom{n}{n/2} \prod_k \left( 1 - \frac{2 \binom{n/2}{k}}{ \binom{n}{k} } \right)^{M_k}
                    = \binom{n}{n/2} \prod_k \left( 1 - 2\frac{ (n/2)^{\underline{k} } } {n^{\underline{k}}} \right)^{M_k}
                    \\&= \binom{n}{n/2} \prod_k \left( \left( 1 - 2^{1-k} \right) \left( 1 + \Oh\left( n^{-1} \right) \right) \right)^{M_k}
                   \\& = \left( 1 + \oh(1) \right) \cdot \frac{2^n}{\sqrt{ \frac{1}{2}\pi n }} \cdot \prod_k \left( \left( \left( 1 - 2^{1-k} \right) \left( 1 + \Oh\left( n^{-1} \right) \right) \right)^{\lambda_k \cdot 2^k} \right)^n
                    \\&= \frac{\Theta(1)}{\sqrt{n}} \cdot 2^n \cdot \prod_k \left( \left( 1 - 2^{1-k} \right)^{\lambda_k \cdot 2^k} \right)^n
                \end{align*}
 
                Thus,
                \begin{align*}
                    \frac{1}{n}\ln \EV{X}
                    &= \ln 2 + \sum_k \lambda_k 2^k \ln\left( 1 - 2^{1-k} \right) + \oh(1) \\
                    &= \ln 2 - \sum_k \lambda_k \cdot \left( 2 + 2^{1-k} + \Oh(2^{-2k}) \right) + \oh(1) \\
                    &= \ln 2 - 2\sum_k \frac{\lambda_k}{1 - 2^{-k}} \left( 1 + \Oh\left( 2^{-2k} \right) \right) + \oh(1)
                \end{align*}
                By choosing $k_0 = k_0(\varepsilon)$ sufficiently large, we may ensure that $1 + \Oh\left( 2^{-2k} \right) \leq 1 + \varepsilon$. This yields
                \begin{align*}
                    \frac{1}{n} \ln \EV{X}
                    &\geq \ln 2 - 2\left(1 + \varepsilon\right)\sum_k \frac{\lambda_k}{1 - 2^{-k}} + \oh(1) \\
                    &> \ln 2 - \left(1 - \varepsilon^2\right) \ln 2 + \oh(1)
                \end{align*}
                Therefore, $\EV{X} \geq \exp \left( \frac{\varepsilon^2 \ln 2}{2} n \right)$.
            \end{proof}

            Let $\gamma :=  \frac{\varepsilon^2 \ln 2}{2}$. Choose $\delta_I = \delta_I(\varepsilon)$ sufficiently small that $H(2\delta_I) \ln 2 < \gamma$. Then 
            \[
                \sum_{s = 0}^{\delta_I n} A(s)B(s) = \oh(1)
            \]
        
        \subsubsection{Middle interval}
            Let $s_I := \lceil \delta_I n \rceil$. We have successfully bounded the contribution to our sum from the initial $s_I$ values of $s$ and we want to do the same for $s \in \{s_I, \ldots, \frac{n}{4}\}$. Note that such $s$' are bounded away from both $0$ and $n/2$, hence we can use Stirling's approximations of binomials (\ref{eq:approxBinom}).

            For $s\in \{s_I, \ldots, \frac{n}{4}\}$ we have
            \begin{align*}
                \label{FUN:A}
                A(s) &= \frac{1}{ \binom{n}{n/2} } \binom{n/2}{s}^2
                = \frac{ \sqrt{\frac{1}{2} \pi n} }{ 2^n } \cdot \frac{1}{ \pi n \cdot \frac{2s}{n} \left( 1 - \frac{2s}{n} \right) } \cdot 2^{H\left( 2s / n \right) n} \cdot \left( 1 + \oh(1) \right) \\
                &= \frac{ \left( 1 + \oh(1) \right) }{ \sqrt{2\pi n} \cdot \frac{2s}{n} \left(1 - \frac{2s}{n}\right) } \left( 2^{H\left( 2s / n \right) - 1} \right)^n
            \end{align*}
            and
            \begin{align*}
                B(s) &= \prod_k \left( \frac{ 1 - \frac{4\binom{n/2}{k} -2\binom{s}{k} - 2\binom{n/2 - s}{k} }{ \binom{n}{k} } }{ \left( 1 - \frac{2 \binom{n/2}{k}}{ \binom{n}{k} } \right)^2 } \right)^{M_k} \\
                &= \prod_k \left( \frac{ 1 - \frac{4(n/2)^k - 2s^k - 2(n/2 - s)^k}{n^k} }{ \left( 1 - \frac{2(n/2)^k}{n^k} \right)^2 } \cdot \left( 1 + \Oh\left( n^{-1} \right) \right) \right)^{\lambda_k \cdot 2^k \cdot n} \\
                &= \prod_k \left( \frac{ 1 - 4(1/2)^k + 2(s/n)^k + 2(1/2 - s/n)^k }{ \left( 1 - 2(1/2)^k \right)^2 } \right)^{\lambda_k \cdot 2^k \cdot n} \cdot \Oh(1)
            \end{align*}

            Based on these approximations, we define two real functions on the interval $\left[ 0, 1/2 \right]$.
            \[
                a(x) := 2^{H(2x) - 1} 
                \text{\quad and \quad} 
                b(x) := \prod_k \left( 1 + \frac{ -4(1/2)^{2k} + 2x^k + 2(1/2 - x)^k }{ \left( 1 - 2(1/2)^k \right)^2 } \right)^{\lambda_k \cdot 2^k}
            \]

            Note that $A(s) = \Oh\left( \frac{1}{\sqrt{n}} \cdot a(s/n)^n \right)$ and $B(s) = \Oh\left( b(s/n)^n \right)$. Together with the bound on the initial interval we get
            \[
                \frac{\EV{X^2}}{\EV{X}^2} = \Oh\left( \sum_{s = s_I}^{n/4} n^{-1/2} \cdot \left( a(s/n) \cdot b(s/n) \right)^n \right) + \oh(1)
            \]
            Therefore, from now on we focus on $a$ and $b$.

            Using $1+x \leq e^x$
            \[
                b(x) \leq \prod_k \exp\left( \lambda_k \cdot 2^k \cdot \frac{ -4(1/2)^{2k} + 2x^k + 2(1/2 - x)^k }{ \left( 1 - 2(1/2)^k \right)^2 } \right)
                = \exp\left( \sum_k 2\lambda_k \cdot \frac{ -2^{1-k} + (2x)^k + (1 - 2x)^k }{ \left( 1 - 2^{1-k} \right)^2 } \right)
            \]

            Let
            \[
                F(x) := \left( H(x) - 1 \right) \ln 2 + \sum_k 2\lambda_k \frac{ -2^{1-k} + x^k + (1-x)^k }{ \left( 1 - 2^{1-k} \right)^2 }
            \]
            Consequently
            \[
                \frac{\EV{ X^2 }}{\EV{ X }^2} 
                = \Oh\left( n^{-1/2} \sum_{s = s_I}^{n/4} \exp\left( n \cdot F \left( \frac{2s}{n} \right) \right) \right) + \oh(1)
            \]

            \begin{restatable}[$F$ is uniformly strictly concave]{lem}{uniformConcave} 
                \label{lem:uniformConcave}
                \hfill \\
                For sufficiently large $k_0$, there exists $c > 0$ such that $F''(x) \leq -c$ for every $x \in [2\delta_I, 1/2]$.
            \end{restatable}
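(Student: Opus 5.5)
We differentiate $F$ twice term by term and show that the entropy part dominates. Since $H(x)=-x\log_2 x-(1-x)\log_2(1-x)$, we have
\[
\frac{d^2}{dx^2}\bigl(H(x)-1\bigr)\ln 2=-\frac{1}{x(1-x)}\leq -4
\]
on $(0,1)$, and in particular on $[2\delta_I,1/2]$. For the edge terms,
\[
\frac{d^2}{dx^2}\bigl(x^k+(1-x)^k\bigr)=k(k-1)\bigl(x^{k-2}+(1-x)^{k-2}\bigr).
\]
Hence
\[
F''(x)=-\frac{1}{x(1-x)}+\sum_k \frac{2\lambda_k k(k-1)\bigl(x^{k-2}+(1-x)^{k-2}\bigr)}{\left(1-2^{1-k}\right)^2}.
\]
Our goal is to show that the sum is at most, say, $2$ uniformly on the interval, for all sufficiently large $k_0$. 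This gives $F''(x)\leq -2=:-c$.

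On $[2\delta_I,1/2]$ we have $x^{k-2}\leq 2^{2-k}$ and $(1-x)^{k-2}\leq (1-2\delta_I)^{k-2}$. The second term is the larger one, so each summand is at most $C\lambda_k k^2 (1-2\delta_I)^{k-2}$ with an absolute constant $C$, using $(1-2^{1-k})^{-2}\leq 4/3\cdot$const for $k\geq 3$. The hypothesis of Lemma~\ref{lem:nonUniformSecondMoment} gives $\sum_k\lambda_k\leq \sum_k\frac{\lambda_k}{1-2^{-k}}<\frac{\ln 2}{2}$. Therefore
\[
\sum_k(\ldots)\leq C\,\frac{\ln 2}{2}\,\max_{k\geq k_0} k^2(1-2\delta_I)^{k-2}.
\]
Because $\delta_I=\delta_I(\varepsilon)$ is fixed before $k_0$, the sequence $k^2(1-2\delta_I)^{k-2}$ tends to $0$. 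We may therefore choose $k_0=k_0(\varepsilon)$ so large that this maximum is below $2/(C\ln 2)$. This choice of $k_0$ is compatible with the one already made in Lemma~\ref{lem:exponentiallyManyColorings}: take the larger of the two.

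The main obstacle is the dependence order of the constants. The bound fails near $x=0$, since $(1-x)^{k-2}\to 1$ there while $\lambda_k k^2$ may be as large as order $k^2$. It must therefore be checked that $\delta_I$ depends only on $\varepsilon$ and not on $k_0$. This holds: $\delta_I$ was determined from $\gamma=\varepsilon^2\ln 2/2$ alone, and the argument of Lemma~\ref{lem:exponentiallyManyColorings} works for every $k_0$ beyond a threshold. With that ordering fixed, everything else is routine, and the uniformity in $x$ and in the profile follows because the bound depends only on $\sum_k\lambda_k$ and $\min\{k:\lambda_k>0\}$.
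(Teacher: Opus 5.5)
Your proof is correct and is essentially the paper's argument. It uses the same second-derivative computation, the bound $-\frac{1}{x(1-x)}\le -4$, the estimate $x^{k-2}+(1-x)^{k-2}\le 2^{2-k}+(1-2\delta_I)^{k-2}$ on $[2\delta_I,1/2]$, and $\sum_k\lambda_k<\frac{\ln 2}{2}$ to reduce everything to $\sup_{k\ge k_0}k(k-1)(1-2\delta_I)^{k-2}\to 0$, with $\delta_I$ fixed before $k_0$. The only blemish is the garbled constant ``$4/3\cdot$const''. Replace it with $(1-2^{1-k})^{-2}\le 16/9$ for $k\ge 3$, or with $\le 4$ for $k\ge 2$ as the paper does.
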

            \begin{proof}
                \hfill \\
                We prove this lemma by analyzing $F$'s second derivative
                \[
                    F''(x) = - \frac{1}{x(1-x)} + \sum_k 2\lambda_k \frac{k(k-1)\left( x^{k-2} + (1-x)^{k-2} \right)}{\left( 1 - 2^{1-k} \right)^2}
                \]
                Note that
                \[
                    -\frac{1}{x(1-x)} \leq -4
                    \text{\quad and, for } k \geq 2 \text{, }
                    \frac{1}{\left( 1 - 2^{1-k} \right)^2} \leq 4
                \]
                Moreover, for $x \in [2\delta_I, 1/2]$
                \[
                    x^{k-2} + (1-x)^{k-2} \leq 2^{2-k} + (1 - 2\delta_I)^{k-2}
                \]
                Therefore
                \begin{align*}
                    \sum_k 2\lambda_k \frac{k(k-1)\left( x^{k-2} + (1-x)^{k-2} \right)}{\left( 1 - 2^{1-k} \right)^2}
                    &\leq 8 \sum_k \lambda_k \cdot \sup_{k \geq k_0} k(k-1)\left( x^{k-2} + (1-x)^{k-2} \right) \\
                    &\leq 8 \sum_k \lambda_k \cdot \sup_{k \geq k_0} k(k-1)\left( 2^{2-k} + (1 - 2\delta_I)^{k-2} \right) \\
                    &\leq 4\ln 2 \sup_{k \geq k_0 }  k(k-1)\left( 2^{2-k} + (1 - 2\delta_I)^{k-2} \right)
                \end{align*}
                The supremum tends to $0$ as $k_0 \to \infty$. Therefore, for sufficiently large $k_0$, there exists some $c > 0$ such that $F''(x) < -c$ throughout the interval.
            \end{proof}

            Note that $F(x) = F(1 - x)$. Therefore, $F'(1/2) = 0$. Direct substitution gives $F(1/2) = 0$ as well. Hence, by Taylor's theorem, for every $x \in [2\delta_I, 1/2]$ there exists some $\xi$ such that
            \begin{align*}
                F(x) &= F(1/2) + F'(1/2)(x - 1/2) + \frac{1}{2}F''(\xi)(x - 1/2)^2 \\
                &= \frac{1}{2}F''(\xi)(x - 1/2)^2 \\
                &\leq -\frac{c}{2}(1/2 - x)^2
            \end{align*}

            Substituting $x = 2s / n$, we get $F(2s / n) \leq -\frac{c}{2}\left( 1/2 - 2s / n \right)^2$. Therefore 
            \begin{align*}
                n^{-1/2} \sum_{s = s_I}^{n/4} \exp \left( n \cdot F(2s / n) \right) 
                &\leq n^{-1/2} \sum_{s = s_I}^{n/4} \exp \left( -\frac{2c}{n} (n/4 - s)^2 \right) \\
                &\leq n^{-1/2} \sum_{d = 0}^{\infty} \exp \left( -\frac{2c}{n} d^2 \right) \\
                &\leq n^{-1/2} \left( 1 + \int_{0}^{\infty} \exp \left( -\frac{2c}{n} t^2 \right) dt \right) \\
                &= n^{-1/2} + \sqrt{\pi / 8c} = \Oh(1)
            \end{align*}
            Combining this with the bound on the initial interval, we get
            \[
                \frac{\EV{X^2}}{\EV{X}^2} = \Oh(1)
            \]
            and the Paley-Zygmund inequality finishes the proof.
    \end{proof}

\end{document}